\def\BibTeX{{\rm B\kern-.05em{\sc i\kern-.025em b}\kern-.08em
    T\kern-.1667em\lower.7ex\hbox{E}\kern-.125emX}}
\newcommand{\mathleft}{\@fleqntrue\@mathmargin0pt}
\newcommand{\mathcenter}{\@fleqnfalse}
\newtheorem{lemma}{Lemma}
\DeclareMathOperator*{\argmin}{arg\,min}
\DeclareMathOperator*{\limisup}{lim\,sup}
\DeclareMathOperator*{\amin}{\mbox{minimize}}
\begin{document}
\title{%AoI-Resource Tradeoff in an HARQ-based Status Update System:
%with both controllable Sampling sources and Uncontrollable Sampling Sources
%Sources with controllable and Uncontrollable sampling
Multi-Source AoI-Constrained Resource Minimization under HARQ: Heterogeneous Sampling Processes
%Resource usage minimization under AoI constraint in an HARQ-based Status Update System
%with two types of sources
% with Jointly controllable and Uncontrollable Sampling Sources
}
\author{Saeid Sadeghi Vilni$^*$, Mohammad Moltafet$^*$, Markus Leinonen$^*$, and Marian~Codreanu$^\dagger$% <-this % stops a space
\thanks{$^*$Centre for Wireless Communications–Radio Technologies, University of Oulu, 90014 Oulu, Finland (e-mail: firstname.lastname@oulu.fi). $^\dagger$Department of Science and Technology, Linkoping University, Sweden (e-mail: marian.codreanu@liu.se).}
}
\date{}
\maketitle

\vspace{-10mm}
\begin{spacing}{1.45}
\begin{abstract}
We consider a multi-source hybrid automatic repeat request (HARQ) based system, where a transmitter sends status update packets of \textit{random arrival} (i.e., uncontrollable sampling) and \textit{generate-at-will} (i.e., controllable sampling) sources to a destination through an error-prone channel. We develop transmission scheduling policies to minimize the average number of transmissions subject to an average age of information (AoI) constraint. First, we consider known environment (i.e., known system statistics) and develop a near-optimal deterministic transmission policy and a low-complexity dynamic transmission (LC-DT) policy. The former policy is derived by casting the main problem into a constrained Markov decision process (CMDP) problem, which is then solved using the Lagrangian relaxation, relative value iteration algorithm, and bisection. The LC-DT policy is developed via the drift-plus-penalty (DPP) method by transforming the main problem into a sequence of per-slot problems. Finally, we consider unknown environment and devise a learning-based transmission policy by relaxing the CMDP problem into an MDP problem using the DPP method and then adopting the deep Q-learning algorithm. Numerical results show that the proposed policies achieve near-optimal performance and illustrate the benefits of HARQ in status updating.   

\textbf{Index Terms:} AoI, multi-source status update, CMDP, Lagrangian, Lyapunov, machine learning.

\end{abstract}
\end{spacing}\vspace{-8mm}
%\begin{IEEEkeywords}
%AoI, multi-source, Lyapunov, CMDP, Lagrangian.
%\end{IEEEkeywords}

%  transforming the CMDP \green{problem} into an MDP \green{problem} using the DPP method

% For the former, we transform the main problem into a constrained Markov decision process (CMDP) problem and, \green{via the Lagrangian relaxation}, develop a near-optimal deterministic transmission policy using the relative value iteration algorithm and bisection.

%The random arrival sources generate packets randomly, whereas the transmitter can request for packets from the generate-at-will sources at any slot. 

%We formulate this as a stochastic optimization problem and solve it in \textit{known environment} {(i.e., known system statistics)}
%, \green{where the transmitter knows the system statistics}, 
%and \textit{unknown environment}. 

%, thereby significantly outperforming a baseline policy, 

%The random arrival sources refers to the sources for which the transmitter cannot control the sampling process, while the generate-at-will sources are the sources for which the transmitter controls the sampling process.

%Numerical results show that the proposed policies for the known environment perform close to the optimal solution, while the learning-based transmission policy is nearing the aforementioned ones. 
%The benefits of HARQ in the status update system are also manifested.
%The results also manifest the low complexity of the LC-DT policy. 

\sloppy

% For peer review papers, you can put extra information on the cover
% page as needed:
% \ifCLASSOPTIONpeerreview
% \begin{center} \bfseries EDICS Category: 3-BBND \end{center}
% \fi
%
% For peerreview papers, this IEEEtran command inserts a page break and
% creates the second title. It will be ignored for other modes.
\IEEEpeerreviewmaketitle

\section{Introduction}
There is a growing demand for Internet-of-Things
(IoT) and cyber-physical systems such as autonomous vehicles, wireless industrial automation, and health monitoring that rely heavily on real-time (fresh) status updates. In these systems, a source  (containing a sensor) monitors a physical phenomenon such as temperature, pressure, or motion and sends status updates to a destination (e.g., a remote controller) for decision-making \cite{aoi1,oiot}. The Age of Information (AoI) \cite{aoi1,aoi2,oiot} is a metric used to evaluate the freshness of information in the status update systems. AoI is defined as the difference between the current time and the generation time of the last received packet at a destination \cite{aoi1,aoi2,oiot}.
%The destination can be kept updated about the status of a random process by assigning the source to send status update packets.
Each status update packet contains a timestamp representing the time when the sample was generated and the measured value of the monitored process. At time instant $t$, denoting the timestamp of the last received status update packet by $U_t$, the AoI, $\delta_t$, is defined as $\delta_t = t-U_t$ \cite{aoi1,oiot, aoi2,ry,pp}.

The reliability of data transmissions under an unreliable communication channel can be enhanced via retransmission protocols \cite{arqb}. Automatic repeat request (ARQ) protocols are standard error control methods, where
after each transmission, the transmitter receives a feedback about the reception status of the packet as acknowledgement/negative-acknowledgement (ACK/NACK) \cite{arqb}. The transmitter keeps retransmitting each packet until it receives an ACK or reaches the maximum allowed number of retransmissions. The ARQ protocols use only the last received version of a packet for decoding, whereas the hybrid ARQ (HARQ) protocols use all received versions, thus increasing the probability of successfully decoding the packet \cite{arqb,harq}.

In this paper, we consider a multi-source HARQ-based status update system, where the sources are connected to a transmitter that sends status update packets to a receiver over an unreliable wireless channel (see Fig.~\ref{s1}). We assume a slotted communication, in which the transmitter can send at most one packet per slot. The sources, which monitor some time-varying random processes, are classified into two categories based on their sampling processes: 1) \textit{random arrival} sources (i.e., uncontrollable sampling) which generate status update packets according to a Bernoulli process, and 2) \textit{generate-at-will} sources (i.e., controllable sampling) which can be commanded to generate a status update packets at any slot. Our considered system may represent a multi-source node such as a multi-sensor IoT device that is equipped with a single transmitter to communicate all the different sensed data to a remote location through a wireless channel. In such scenario, the transmitter has control over the sampling of some sources, e.g., on-demand requesting of samples from a temperature or moisture sensor. On the contrary, the sampling of some sources depends on other grounds, such as energy to generate a sample (e.g., the source needs to harvest energy) or time to generate a sample (e.g., the source needs to scan an area which takes a random amount of time), leading to generating packets at random times. We further consider that each random arrival source has a buffer to retain the last (randomly) generated packet.
Furthermore, in order to benefit from the HARQ, the transmitter has a memory to store the last transmitted but not successfully decoded packet of each source as this packet has a higher chance of being decoded than a new packet.
%\red{In such scenario, the transmitter has control over the sampling of some sources, e.g., a temperature or moisture sensor under on-demand requesting of samples; on the contrary, some sources like a motion or light sensor may generate packets randomly in an event-triggered fashion.}

Apart from freshness requirements, the radio resources (e.g., power and channel utilization) also play an essential role in the operation of status update systems \cite{eiage}. %For instance, the reduction in the number of transmissions of a status update system in which the energy supply is based on a battery or energy harvesting, can increase the reliability of the energy resources. Moreover, the reducing in the number of transmissions provides more bandwidth for other transmitters.
Hence, we investigate the problem of minimizing the average number of transmissions subject to the average AoI constraint. The solution of the problem determines the transmission status at each time slot, i.e., transmit a fresh packet from a source, retransmit the previously transmitted but not successfully decoded packet from a source, or stay idle.

A scenario where the controller knows the probability of possible outcomes after making a decision in a system is called a known environment \cite{env}. In our considered system, the known environment corresponds to the case where the transmitter knows the packet arrival rate of the random arrival sources and the probability of successful decoding after each transmission attempt.
Since, in some cases, the known environment is not accessible, we investigate the problem in both the known environment and the unknown environment.
We propose three solutions to the problem, namely, a (stationary) deterministic transmission policy and a low-complexity dynamic transmission (LC-DT) policy for the known environment and a learning-based transmission policy for the unknown environment.

To obtain the deterministic transmission policy in the known environment, we cast the main problem as a constrained Markov decision process (CMDP) problem. Then, we transform the CMDP problem into an MDP problem via the Lagrangian relaxation. In general, an optimal policy for the CMDP problem is a randomized mixture of two deterministic policies, where one deterministic policy is feasible (satisfies the constraint) and the another policy is infeasible \cite{beutler}; see also recent applications \cite{bz,ngz,gz}. %Each deterministic policy is obtained by solving the MDP problem for a given Lagrangian multiplier in a way that a policy satisfies the constraint of the CMDP problem and another policy violates it. 
However, since obtaining such randomized policy is often computationally intractable, we propose a near-optimal {practical} deterministic transmission policy (feasible deterministic policy) and a lower-bound policy (infeasible deterministic policy) {for benchmarking purposes} using relative value iteration algorithm (RVIA) and the bisection algorithm. Since the number of states to explore in RVIA increases exponentially in the number of sources and RVIA is run at each bisection iteration, obtaining the deterministic transmission policy is inefficient computationally. Therefore, we propose the LC-DT policy by using the drift-plus-penalty (DPP) method \cite{lyp}. According to the DPP method, the average AoI constraint is transformed into a queue stability constraint, and subsequently, the time average main problem is transformed into an optimization problem that is to be solved at each time slot.
To obtain the learning-based transmission policy, we use the DPP method to transform the CMDP problem into an MDP problem, in which we minimize the time average DPP function. Then, we develop the policy by solving the MDP problem with a deep Q-learning (DQL) algorithm \cite{dqn}.

In the numerical results, we analyze the effectiveness of the proposed policies. We compare the effectiveness and complexity of the policies and study the effect of employing HARQ. %In addition, we show that the learning-based transmission policy operates very close to the dynamic transmission policy.

%\subsection{Contribution}

The main contributions of the paper are summarized as
follows:
\begin{itemize}
    \item We consider a multi-source HARQ-based status update system that consists of random arrival and generate-at-will sources. We minimize the average number of transmissions under the average AoI constraint in the known and unknown environment.
    \item For the known environment, we develop a deterministic transmission policy using the Lagrangian relaxation, RVIA, and the bisection. Moreover, we propose a low-complexity dynamic transmission policy using the DPP method.
    \item For the unknown environment, we develop a learning-based transmission policy by using the DPP method to cast the main problem as an MDP problem, which is then solved by applying DQL.
    \item The numerical results demonstrate the near-optimal performance of the proposed transmission policies compared to the lower-bound policy, and a significant improvement respect to a baseline policy. The results corroborate that HARQ improves the performance of the system and illustrate that the learning-based transmission policy performs close to the policies developed for the known environment. 
\end{itemize}

\subsection{Related Work}
AoI characterization has extensively been studied from the perspective of queueing theory; see, e.g., \cite{ry2,mm1,enj2,nr,nnr} and the references therein. One of the earliest studies to analyze AoI under an HARQ protocol is \cite{enj2}, where the authors derived the closed-form expression of the average AoI for an HARQ-based M/G/1/1 queueing system.
Considering the queueing system and the derived AoI result from \cite{enj2}, the work \cite{nr} studied the age-optimal redundancy allocation problem under a constraint on the decoding error probability for both chase combining and incremental redundancy HARQ protocols.
An {M/M/1} queueing system with network-code-HARQ protocol is considered in \cite{nnr}, where the closed-form expression of AoI is derived.

Besides the analysis, the AoI has been studied in the retransmission-based status update systems from the perspective of sampling and transmission policies
\cite{frzi,shi2021,arafa2021,deng2021,feng2021,wang2020,gz,ngz}.
In \cite{frzi}, the authors considered a multi-source and generate-at-will-based status update system and minimized the average AoI by proposing a source selection policy under three pre-defined transmission policies.
In \cite{shi2021}, the authors derived the closed-form expression of the average AoI in an HARQ-based status update system in which two energy harvesting sources send the same information for providing diversity at the destination.
The work \cite{arafa2021} considered a multi-source status update system in which the transmitter harvests energy and uses a greedy retransmission policy. They minimized the average AoI by determining a set of transmission times and choosing a source to send status update packet.
An HARQ-based non-orthogonal multiple access system with two users are considered in \cite{deng2021}, where the average AoI is  minimized by determining the transmit power and transmission status, i.e., transmitting a new packet or retransmitting the previously transmitted but not successfully decoded packet, at each slot.
In \cite{feng2021}, the authors investigated the average AoI minimization problem in a status update system with a pre-defined retransmission policy to find the times for updating the destination.
The work \cite{wang2020} studied the average AoI minimization problem in an HARQ-based status update system. They calculated the probability of decoding failure through an erasure channel and developed a threshold-based transmission policy that decides between the transmission of a new packet and the retransmission of the previously transmitted one.

The most related works to our paper are \cite{ngz,gz}. The work \cite{gz} considered a similar HARQ-based status update system to ours, yet with the following differences. The authors in \cite{gz} considered a single generate-at-will source, while we consider both random arrival and generate-at-will sources as a multi-source system. Considering the random arrival sources makes the system more complicated, as the transmitter does not know the availability of the fresh packets at the subsequent slots. We study the problem of minimizing the average number of transmissions subject to the average AoI constraint, while they studied the average AoI minimization problem subject to the average number of transmissions constraint.
%Besides using the CMDP with the Lagrangian relaxation to solve the problem for known environment, we propose a low-complexity Lyapunov-based dynamic transmission policy, whereas they only use the CMDP with the Lagrangian relaxation to solve their problem. 
Similarly as in \cite{gz}, we use the CMDP approach along with the Lagrangian relaxation to solve the problem in the known environment; however, we also propose the low-complexity Lyapunov-based dynamic transmission policy.
Furthermore, for the unknown environment, they proposed a learning-based transmission policy by the Lagrangian relaxation which involves running the learning procedure for several Lagrangian multipliers. Differently, our learning-based transmission policy utilizes the Lyapunov optimization theory, and thus, the learning procedure needs to be run only once. In \cite{ngz}, which is an extension of \cite{gz}, the authors considered an HARQ-based status update system that contains one generate-at-will source and several users (destinations), in which at most one user is served at each slot. They constructed a CMDP problem for minimizing the weighted average AoI subject to the average number of transmissions constraint. They solved the CMDP problem with the Lagrangian relaxation for the known environment; for the unknown environment, they proposed different learning-based transmission policies by the Lagrangian relaxation.

\subsection{Organization}
The rest of this paper is organized as follows. The system model and problem formulation are presented in Section \ref{sys}. The CMDP formulation and its solution are presented in Section \ref{stationary}. In Section \ref{dpp}, we 
present the LC-DT and the learning-based transmission policies. Numerical results are presented in Section \ref{nrds}. Finally, concluding remarks are made in Section \ref{clcn}. 

\section{System Model and Problem Formulation}\label{sys}

%%%%%%%%%%%%%%%% Drafting  %%%%%%%%%%%%%%%%%%%%%%%%%%%%%%%%%%
%%% Overall view, system blocks & entities, general operation of the system
%In this section, we present the considered system model and the problem formulation.
\subsection{System Model}
We consider a multi-source status update system that consists of $K$ sources, one transmitter, and one receiver, as depicted in Fig.~\ref{s1}. The receiver is interested in timely information about different random processes monitored by the $K$ sources. The transmitter sends status update packets\footnote{Each status update packet contains a timestamp representing the time when the sample was generated and the measured value of the monitored process.} to the receiver through an error-prone wireless channel with the aid of an HARQ protocol. The system operates in discrete time  with unit time slots $t\in\{1,2,\ldots\}$.%, where $\mathbb{N}$ is the set of natural numbers.

\begin{figure}[t]
    \centering
    \includegraphics[width = 15cm]{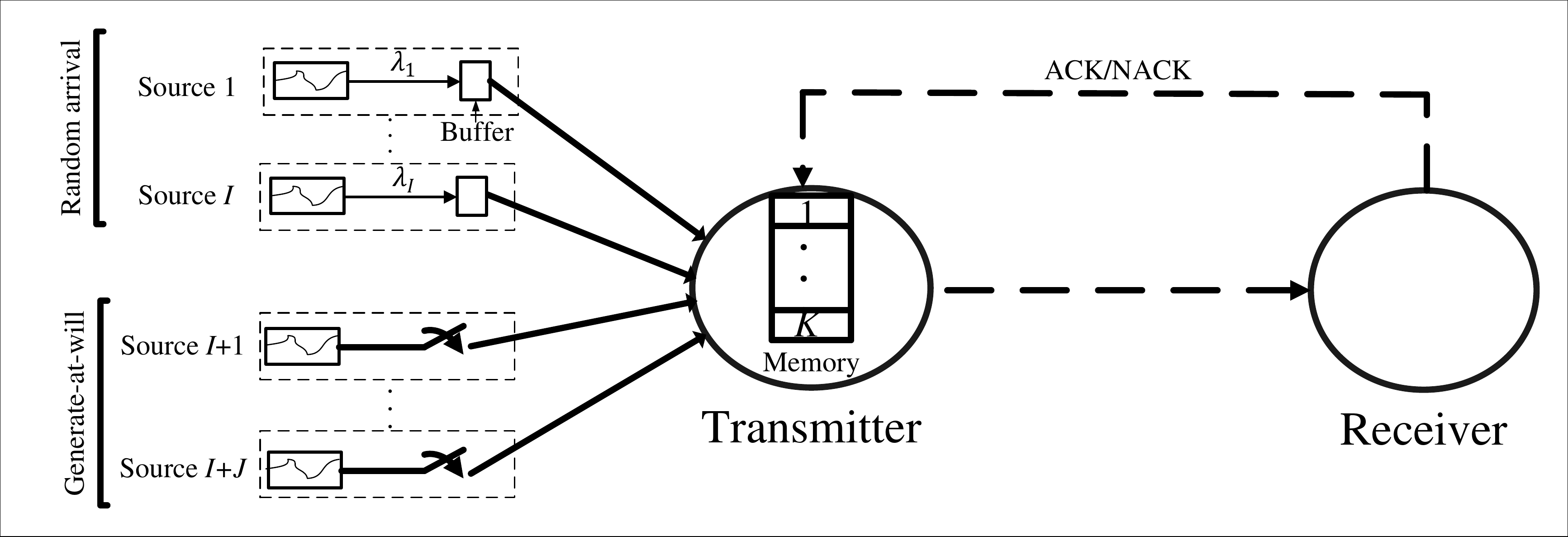}\vspace{-2mm}
    \caption{\small The considered HARQ-based multi-source status update system with two groups of sources: 1) random arrival sources with buffers which receive fresh packets with probability $\lambda_k$, 2) generate-at-will sources. The previously transmitted but not successfully decoded packets of each source are stored in the transmitter's memory. After each transmission attempt, the receiver sends a feedback signal as ACK (successful decoding) or NACK (unsuccessful decoding) to the transmitter.}
    \label{s1}\vspace{-10mm}
\end{figure}

%%% Sources, two different classes
The $K$ sources are divided into two classes based on their sampling processes: 1) a set $ \mathcal{I}$ of $I$ \textit{{random arrival}} sources whose sampling processes are uncontrollable and 2) a set $ \mathcal{J}$ of $J$ {\textit{generate-at-will}} sources, where the transmitter can sample the process at any time.
%1) A set of \textit{{random arrival}} sources $\mathcal{I}=\{1,\dots, I \}$, whose samplings are uncontrollable
%and 2) a set of {\textit{generate-at-will}} sources $\mathcal{J}=\{1,\dots, J \}$, whose samplings are controllable.
Each source $k\in\mathcal{I}$ generates status update packets randomly and independently at the beginning of slots
%, indexed by $t\in\{ 1,2,\ldots \}$, 
according to a Bernoulli random process with parameter $\lambda_k$. %Each source of set $\mathcal{J}$ can generate a packet, i.e., we can sample the process, at the beginning of any slot $t$.
We denote the set of all sources by ${\mathcal{K} = \mathcal{I}\cup\mathcal{J} = \{1,\dots,K\}}$, where $K = I+J$.

%%% Waiting slots in the system = buffers and memory  
Each random arrival source has a \textit{buffer} of size one to store the last arrived packet. As long as a new packet does not arrive, the buffer keeps the last arrived packet.
%We assume that each packet stored in the buffer at slot $t$ is accessible to transmitter at the same slot $t$.
The transmitter has a {\textit{{memory}}} of size $K$ packets to store the previously transmitted but not successfully decoded packets of each source. Note that, after a number of unsuccessful transmission attempts of a packet from a source, the transmitter may decide to transmit a packet from the other sources. In this case, the transmitter retains the previously transmitted but not successfully decoded packet of each source in the transmitter's memory for possible future retransmissions, since this packet is more likely to be decoded than a new packet from that source due to the HARQ protocol. 
%Furthermore, in order to make the system capable to benefit from retransmissions through HARQ, we consider that the transmitter has a \textit{{memory}}, of size $K$ packets, to store the previously transmitted but not successfully decoded packets of each source $k\in \mathcal{K}$.
We term a packet in the transmitter's memory an \textit{under-process packet}. Thus, the maximum number of packets stored in the system is $I+K$ packets, i.e., $I$ packets at the buffers of random arrival sources and $K$ packets at the transmitter's memory. 

%%% Transmissions, limitations, packet management 
We assume that the transmitter\footnote{
A mathematically equivalent system is the one where each source is equipped with an own transmitter while at most one source is allowed to transmit at each slot.}
%Considering that each source has its transmitter and one source can have transmission at each slot, the new system is the same as the current system from a mathematical perspective.
can transmit at most one packet per slot. % and a packet transmission takes one slot. Thus, a
At each slot, the transmitter decides whether to send a packet or stay idle.  %(either from the random arrival sources or from the generate-at-will sources). % We assume that if transmitter decides to transmit a packet, the transmission takes one slot.
The possible transmission options for a random arrival source $k\in\mathcal{I}$ are either transmitting the packet from its buffer or retransmitting the under-process packet from the transmitter's memory. The possible transmission options for a generate-at-will source $k\in\mathcal{J}$ are either generating and  transmitting a new sample or retransmitting the under-process packet from the transmitter's memory. We refer to the packets in the buffers of the random arrival sources and to the newly generated packets of the generate-at-will sources as \textit{fresh packets}. If the transmitter decides to transmit a fresh packet %\footnote{The freshest available packet of source $k\in\mathcal{K}$ at time slot $t$ refers to the packet in the buffer of source $k\in \mathcal{I}$ at time slot $t$ or the packet generated by source $k\in \mathcal{J}$ at time slot $t$.}
from a given source%$k$%\in\mathcal{K}$%, i.e., either send the packet from the buffer of source $k\in \mathcal{I}$ or generate a packet from source $k\in \mathcal{J}$
, this packet replaces the source's under-process packet in the transmitter's memory.
\subsubsection{Transmission Model}
At each slot $t$, the transmitter takes one of the following actions: 1) transmit a fresh packet from a source, 2) retransmit an under-process packet of a source, or 3) stay idle. %Note that the transmitter can transmit only one packet per slot. %decides to have a transmission for a source, the other sources stay idle at that slot.
Let $u_{t,k}\in\{0,1\}$ denote the decision variable about transmitting a fresh packet from source $k$ at slot $t$, where $u_{t,k}=1$ indicates that the transmitter sends the fresh packet, and $u_{t,k}=0$ otherwise. Let $r_{t,k}\in\{0,1\}$ denote the decision variable about retransmitting the under-process packet of source $k$ at slot $t$, where $r_{t,k}=1$ indicates that the transmitter sends the under-process packet, and $r_{t,k}=0$ otherwise. Since the transmitter can transmit at most one packet per slot, we have $\sum_{k\in \mathcal{K}}u_{t,k} + r_{t,k}\leq1$.

\textit{HARQ protocol}: In the considered HARQ protocol, every packet transmission attempt is followed by an instantaneous error-free ACK/NACK feedback signal from the receiver. %The receiver sends an ACK signal if it decodes the transmitted packet and NACK signal if it cannot decode the packet. 
Let $d_{t}\in\{0,1\}$ denote the packet reception status at slot $t$, where $d_{t}=1$ indicates that the transmitted packet was decoded successfully (ACK), and $d_{t}=0$ indicates that either the transmitted packet was not decoded successfully (NACK) or the transmitter remained idle. In the HARQ protocol, the receiver uses all previously received versions of a packet to decode it. Therefore, the probability of successfully decoding a packet is an increasing function of the number of attempted transmissions of the packet. Let ${x_{t,k}}$ denote the number of attempted transmissions of a packet of source $k$ up to slot $t$. The evolution of $x_{t,k}$ is given as
\begin{equation}\label{trr}
x_{t+1,k} = 
     \left\{
        \begin{array}{ll}
        1 &\quad  u_{t,k}=1 \\
        x_{t,k} &\quad  u_{t,k}+r_{t,k}=0 \\
        x_{t,k}+1 &\quad  r_{t,k}=1. \\
        \end{array}
     \right.
\end{equation}

To account for the fact that most practical HARQ protocols allow only a finite number of retransmissions, we limit the number of transmission attempts of a packet to $x^{\mathrm{max}}$, i.e., ${x_{t,k}\leq x^{\mathrm{max}}}$.
The function representing the probability of successful decoding after $x_{t,k}$ transmissions is denoted by $f(x_{t,k})$. %Since in the HARQ the receiver uses all previously transmitted versions of a packet for decoding, $f(\cdot)$ is a non-decreasing function.
In practice, $f(\cdot)$ is a complicated function of several parameters such as the channel conditions, the channel coding methods, and the combining technique utilized in the HARQ protocol \cite{frr,harqc}.% We elaborate $f(\cdot)$ in the section \ref{nrds}. 
\subsubsection{Age of Information}
The AoI is defined as the time elapsed since
the generation of the most recently received status update packet at a destination.
Let $\delta_{t,k}$ denote the AoI of source $k$ at the receiver at slot $t$; we refer to this simply as the AoI of source $k$ hereinafter. We use the common assumption (see, e.g., \cite{gz,ngz,bz,mmpower,9241401}) that all AoI values in the system are upper bounded by $\delta^{\mathrm{max}}$. Besides making the analysis tractable, this supports the fact that an AoI value exceeding a high enough upper bound carries the same timeliness information as the upper bound for the destination's decision making (e.g., control actions for a drone).
To characterize the AoI of each source, we need to define the age of a fresh packet at a source and the age of an under-process packet in the transmitter's memory. These are defined in the following.

\textit{Age of the fresh packets}:
Let $\delta^{\mathrm{f}}_{t,k}$ denote the age of the fresh packet of source $k$ at slot $t$. For a random arrival source, if a packet arrives at the buffer at the beginning of slot $t$, the age of the fresh packet becomes zero, otherwise it is incremented by one. % up to $\delta^{\mathrm{max}}$. 
Let $b_{t,k}\in\{0,1\}$ denote the packet arrival status of source $k\in\mathcal{I}$ at slot $t$, where $b_{t,k} = 1$ indicates a packet arrives at the buffer, and $b_{t,k} = 0$ otherwise. Note that $\mathrm{Pr}(b_{t,k} = 1) = \lambda_k$. For the generate-at-will sources, the transmitter can generate a fresh packet at any time so that the age of the fresh packet is always zero. Thus, the evolution of {$\delta_{t,k}^{\mathrm{f}}$ with the initial value $\delta_{0,k}^{\mathrm{f}} = 0$} is given as
\begin{equation}\label{ageb}
 \delta_{t,k}^{\mathrm{f}} = 
\left\{
    \begin{array}{ll}
       0 &\quad b_{t,k} = 1,~ k\in \mathcal{I} \\
       \min\{\delta^{\mathrm{f}}_{t-1,k}+1,{\delta}^{\mathrm{max}}\} & \quad b_{t,k} = 0,~ k\in \mathcal{I}\\
       %\delta^{\mathrm{f}}_{t-1,k}+1 & \quad b_{t,k} = 0,~ k\in \mathcal{I}\\
       0 &\quad k\in \mathcal{J},\\
     \end{array}
    \right.
\end{equation}

\textit{Age of the under-process packets}:
Let $\delta^{\mathrm{p}}_{t,k}$ denote the age of the under-process packet of source $k$ at slot $t$. If the transmitter sends a fresh packet of source $k$ at slot $t$, the age of the under-process packet of the source  at the next slot drops to $\min\{\delta^{\mathrm{f}}_{t,k}+1,{\delta}^{\mathrm{max}}\}$. In other cases (i.e., retransmission or staying idle), the age of the under-process packet is incremented by one. % up to $\delta^{\mathrm{max}}$. 
The evolution of $\delta^{\mathrm{p}}_{t,k}$ { with the initial value $\delta_{0,k}^{\mathrm{p}} = 0$} is given by
\begin{equation}\label{ager}
\delta_{t+1,k}^{\mathrm{p}} = 
\left\{
\begin{array}{ll}
       \min\{\delta^{\mathrm{f}}_{t,k}+1,{\delta}^{\mathrm{max}}\} &\quad  u_{t,k}=1 \\
       
       \min\{\delta_{t,k}^\mathrm{p}+1,{\delta}^{\mathrm{max}}\} & \quad \mathrm{otherwise}.\\
     \end{array}
     \right.
\end{equation}

\textit{AoI at the receiver}: Having defined $\delta^{\mathrm{f}}_{t,k}$ and $\delta^{\mathrm{p}}_{t,k}$, we now characterize the evolution of the AoI at the receiver. If the transmitter sends a fresh packet of source $k$ at slot $t$ (i.e., $u_{t,k} = 1$) and the packet is decoded successfully at the receiver (i.e., $d_{t} = 1$), the AoI of the source  at the next slot drops to $\min\{\delta^{\mathrm{f}}_{t,k}+1,{\delta}^{\mathrm{max}}\}$, otherwise (i.e., $d_{t} = 0$), the AoI increases by one. % up to $\delta^{\mathrm{max}}$. 
If the transmitter retransmits the under-process packet of source $k$ (i.e., $r_{t,k} = 1$) and it is decoded successfully at the receiver, the AoI of the source at the next slot drops to $\min\{\delta^{\mathrm{p}}_{t,k}+1,{\delta}^{\mathrm{max}}\}$, otherwise (i.e., $d_{t} = 0$), the AoI increases by one. % up to $\delta^{\mathrm{max}}$. 
If, at slot $t$, the transmitter does not transmit any packet of source $k$ (i.e., $u_{t,k}+r_{t,k} = 0$), the AoI of the source at the next slot increases by one. % up to $\delta^\mathrm{max}$. 
The evolution of $\delta_{t,k}$ { with the initial value $\delta_{0,k}^{\mathrm{}} = 0$} is given as
\begin{equation}\label{age1}
\delta_{t+1,k} = 
\left\{
\begin{array}{ll}
       \min\{\delta^{\mathrm{f}}_{t,k} + 1,{\delta}^{\mathrm{max}}\} & \quad u_{t,k}d_{t}=1\\
       \min\{\delta^{\mathrm{p}}_{t,k} + 1,{\delta}^{\mathrm{max}}\} & \quad r_{t,k}d_{t}=1\\
       \min\{\delta_{t,k}+1,{\delta}^{\mathrm{max}}\} & \quad u_{t,k}(1-d_{t})=1\\
       \min\{\delta_{t,k}+1,{\delta}^{\mathrm{max}}\} & \quad r_{t,k}(1-d_{t})=1\\
      \min\{\delta_{t,k}+1,{\delta}^{\mathrm{max}}\} &\quad u_{t,k}+r_{t,k}=0.\\       
\end{array}
\right.
\end{equation}
Note that the conditions in \eqref{age1} are mutually exclusive and collectively exhaustive.

%%%%%%%%%%%%%%%%%%%%%%%%%%%%%%%%%%%%%%%
\subsection{Problem Formulation}
Our main goal is to minimize the average number of transmissions subject to the average AoI constraint by finding a transmission policy that determines the transmission decision variables at each slot $t$, $\{u_{t,k},r_{t,k}\}_{k\in\mathcal{K}}$.
The transmission decision for slot $t$ is based on the age of the fresh packets, $\delta^{\mathrm{f}}_{t,k}$, the age of the under-process packets, $\delta^{\mathrm{p}}_{t,k}$, the AoI of each source, $\delta_{t,k}$, and the number of previous transmission attempts of each under-process packet, $x_{t,k}$.
% at the beginning of each time slot.
%Our main goal is to minimize the average resource usage under the average AoI constraint, called the resource minimization problem, by determining the sequence of actions $\{a_t\}_{t=1,2,\ldots}$, where $a_t = (u_{t,1},r_{t,1},\dots,u_{t,K},r_{t,K})$ with binary elements and no more than one element equal to one.
%We define the average of AoI over sources, $\hat{\delta}_{t} = \frac{1}{K}\sum_{k=1}^{K}{\delta_{t,k}}$, as the AoI of the system (system-AoI).

Let $\tau_t\in\{0,1\}$ denote the transmission status at slot $t$, where $\tau_t = 1$ indicates that the transmitter sends a packet, and $\tau_t = 0$ otherwise. Thus, we have
\begin{equation}
 \tau_t=
 \left\{
\begin{array}{ll}
     1&\quad  \textstyle\sum_{k\in \mathcal{K}}u_{t,k}+r_{t,k}=1 \\
     0&\quad \mathrm{otherwise}.
    \end{array}
\right.
\end{equation}
Let $\bar{\tau}$ denote the expected long-term time average number of transmissions, defined as
\begin{align}\label{C1}
\begin{array}{ll}
    \bar{\tau} = \underset{T\rightarrow \infty}{\limisup} \frac{1}{T} \sum_{t=1}^{T}\mathbb{E} \{{ \tau_t}\},
    \end{array}
\end{align}
where $\mathbb{E}\{\cdot\}$ is the expectation with respect to the randomness of the system (i.e., packet arrival processes of the random arrival sources and randomness in the communication channel) and the decision variables $\{u_{t,k},r_{t,k}\}_{k\in\mathcal{K}}$. Finally, let $\bar{\delta}_{}$ denote the expected long-term time average of AoI, given as
\begin{equation}\label{O1}
\begin{array}{ll}
 \bar{\delta} = \underset{T\rightarrow \infty}{\limisup}\frac{1}{T}\sum_{t=1}^{T}\mathbb{E} \{\hat\delta_{t} \},
 \end{array}
\end{equation}
where $\hat\delta_{t}$ is average AoI over all sources at slot $t$, given as
\begin{equation}\label{avokaoi}
    \begin{array}{ll}
         \hat\delta_{t} = {\frac{1}{K}\sum_{k=1}^{K} \delta_{t,k}}.
    \end{array}
\end{equation}

Using \eqref{C1} and \eqref{O1}, the main problem of this paper is formulated as the following stochastic optimization problem:
\begin{spacing}{1.3}\vspace{-5mm}
\begin{subequations}\label{p1}
\begin{alignat}{2}
\amin\quad & \bar{\tau}\label{p1:m}\\
\mbox{subject to} \quad & \bar{\delta}\leq{\Delta}^{\mathrm{max}}\label{p1:1}\\
&x_{t+1,k}\leq x^{\mathrm{max}},~ k\in \mathcal{K},~t\in \mathbb{N}\label{p1:2}\\
&\textstyle\sum_{k\in \mathcal{K}}u_{t,k} + r_{t,k}\leq1, ~t\in \mathbb{N}\label{p1:3}\\
&u_{t,k},r_{t,k}\in \{0,1\},~ k\in \mathcal{K},~t\in \mathbb{N},\label{p1:4}
%&r_{t,k}\in \{0,1\},\forall k\in \mathcal{K},t,\\
\end{alignat}
\end{subequations}
\end{spacing}
%\begin{equation}
%                  \begin{array}{ll}
%        		\mbox{minimize} & \bar{P}\\
%        		\mbox{subject to} & 
%        		\bar{\delta}\leq{\Delta}^{\mathrm{max}}\label{p1:1}\\
%&x_{t,k}\leq x^{\mathrm{max}},\forall k\in %\mathcal{K},t,\label{p1:2}\\
%&\sum_{k=1}^{K}u_{t,k} + r_{t,k}\leq1,\forall t,\label{p1:3}\\
%&u_{t,k},r_{t,k}\in \{0,1\},\forall k\in %\mathcal{K},t,\label{p1:4}
%\end{array}
%\end{equation}
\noindent with variables $\{u_{t,k},r_{t,k}\}_{k\in\mathcal{K}}$ for
all $t\in\mathbb{N}$,
where  $\Delta^{\mathrm{max}}$ is the maximum allowed average AoI. The constraints of problem \eqref{p1} are as follows. Inequality \eqref{p1:1} represents the average AoI constraint. Inequality \eqref{p1:2} ensures that the number of transmission attempts for each packet cannot exceed $x^{\mathrm{max}}$. Inequality \eqref{p1:3} ensures that the transmitter can transmit at most one packet per slot. Expression \eqref{p1:4} indicates the binary nature of the decision variables.

%In the following two sections, we solve the problem in two different scenarios, namely, the known environment and the unknown environment.

%In each environment, we propose a transmission policy to solve the problem.
%. When we know the probability of results after taking an action, we define an environment as a known environment. For our system, if we know the probability of updating the buffers and the probability of successful decoding after a packet is transmitted, the environment is the known environment. Otherwise, it is the unknown environment.
%Next, we assume two environments based on the prior knowledge about the reaction of the environment as known and unknown environments. Specifically, if the probability of output after applying an action is known, it is the known environment, otherwise, it is the unknown environment.

%%%%%%%%%%%%%%%%%%%%%%%%%%%%%
\section{Deterministic Transmission policy}\label{stationary}
In this section, we propose a (near-optimal) solution to main problem \eqref{p1} for the known environment, i.e., the packet arrival probability of each random arrival source, $\lambda_k$, and the probability of successful decoding function, $f(\cdot)$, are known. We cast problem \eqref{p1} as a constrained Markov decision process (CMDP) problem. Then, we use the Lagrangian relaxation % to transform the CMDP problem into an (unconstrained) MDP problem, parametrized by a Lagrange dual variable. %We develop an iterative algorithm, in which we solve the MDP problem for a given Lagrangian multiplier via the relative value iteration (RVI) and update Lagrangian multiplier via bisection algorithm, to solve the CMDP problem.
%using the relative value iteration (RVI) and the bisection algorithm,
%Solving the MDP problem for an appropriate Lagrangian multiplier, we
{to find a near-optimal deterministic transmission policy.}% for the CMDP problem.% With a fixed value of the Lagrangian multiplier, we obtain an optimal solution to the MDP problem using relative value iteration (RVI). Finally, using the bisection algorithm, we obtain the optimal Lagrangian multiplier and propose an optimal transmission policy that solves the CMDP problem.
\subsection{CMDP Formulation}\label{cmdpfm}
The CMDP is defined by a tuple of five elements $(\mathcal{S},\mathcal{A}_s,\mathcal{P}, c,d)$: state space, action space, state transition probabilities, and two cost functions, which are defined in the following.

{\textit{State}}: Let ${s_{t,k}=\{ \delta^{\mathrm{f}}_{t,k},\delta^{\mathrm{p}}_{t,k},\delta_{t,k},x_{t,k}\}}$ denote the state of source $k$ at slot $t$. The system state at slot $t$ is defined as ${s_{t}=\{s_{t,k}\}_{k\in \mathcal{K}}}\in \mathcal{S}$, where $\mathcal{S}$ is the state space. The initial state is denoted with $s_0=\{s_{0,k}\}_{k\in \mathcal{K}}$, where $s_{0,k}=\{0,0,0,0\}$ for all $k\in \mathcal{K}$.

{\textit{Action}}:
Let $a_t=\{a_{t,k}\}_{k\in\mathcal{K}}\in\mathcal{A}_{s_t}$ denote the action of the transmitter at slot $t$, where ${a_{t,k}=\{u_{t,k},r_{t,k}\}}$ represents the action for source $k$, and $\mathcal{A}_{s_t}$ is a space of feasible actions in state $s_t$, defined as $\mathcal{A}_{s_t}= \big\{u_{t,k},r_{t,k}\in\{0,1\} \mid k\in \mathcal{K},~\sum_{k\in \mathcal{K}}u_{t,k}+r_{t,k}\leq 1,~r_{t,k}(x_{t,k}+1)\leq x^{\mathrm{max}}\big\}$.
% \begin{equation*}
%     \begin{array}{ll}
%               \mathcal{A}_{s_t}&\hspace{-3mm}= \big\{u_{t,k},r_{t,k}\in\{0,1\} \mid k\in \mathcal{K},~\sum_{k\in \mathcal{K}}u_{t,k}+r_{t,k}\leq 1,~r_{t,k}(x_{t,k}+1)\leq x^{\mathrm{max}}\big\}.
%     \end{array}
% \end{equation*}
%The action space is defined as $\mathcal{A}=\cup_{s_t\in\mathcal{S}}\mathcal{A}_{s_t}$.% Note that the action space may depend on the state not on the time.

{\textit{Cost functions}}: The CMDP has two cost functions: 1) transmission cost, defined as ${c(a_t) = \tau_t}$, i.e., $c(a_t)=1$ if the transmitter makes a transmission attempt at slot $t$, otherwise $c(a_t)=0$, and 2) AoI cost, defined as $d(s_t) = \hat{\delta}_{t}$, i.e., the average AoI over sources at slot $t$.

{\textit{State transition probabilities}}: Let ${\mathcal{P}(s'\mid s,a)=\mathrm{Pr}(s'=s_{t+1}\mid s=s_t,a=a_t)}$ denote the state transition probabilities, defined as the probability of moving from current state ${s=s_t}$ to a next state ${s'=s_{t+1}}$ under action $a=a_t$. Given an action, the one-slot evolution of the AoI values (at the source, memory, and destination) and of the number of transmissions of the under-process packets is independent among the sources.
%the AoI values (at the source, memory, and destination) and the evolution of the number of transmissions of the under-process packet for a given source are independent of the other sources. 
Therefore, the state transition probability factorizes as ${\mathcal{P}(s'\mid s,a)=\Pi_{k\in\mathcal{K}}\mathrm{Pr}(s_{t+1,k}\mid s_{t,k},a_{t,k})}$.
Let us denote ${\tilde\delta_{t,k}^{\mathrm{f}} \triangleq \min\{\delta_{t,k}^{\mathrm{f}}+1,\delta^{\mathrm{max}}\}}$, ${\tilde\delta_{t,k}^{\mathrm{p}} \triangleq \min\{\delta_{t,k}^{\mathrm{p}}+1,\delta^{\mathrm{max}}\}}$, ${\tilde\delta_{t,k}^{\mathrm{}} \triangleq \min\{\delta_{t,k}^{\mathrm{}}+1,\delta^{\mathrm{max}}\}}$, ${\bar f(\cdot) \triangleq 1-f(\cdot)}$, and ${\bar \lambda_k \triangleq 1-\lambda_k}$. Given the state $s_{t,k}=\{ \delta^{\mathrm{f}}_{t,k},\delta^{\mathrm{p}}_{t,k},\delta_{t,k},x_{t,k}\}$, the state transition probabilities for a random arrival source $k\in\mathcal{I}$ under different actions can be expressed as
\begin{spacing}{1.3}\vspace{-5mm}
\begin{subequations}\label{prt}
    \begin{alignat}{2}
    \label{prt1}
    &\mathrm{Pr}\big(\{0,\tilde\delta_{t,k}^{\mathrm{f}},\tilde\delta_{t,k}^{\mathrm{f}},1\}\mid s_{t,k},a_{t,k}=\{1,0\}\big)=f(1)\lambda_k\\
    &\mathrm{Pr}\big(\{\tilde\delta_{t,k}^{\mathrm{f}},\tilde\delta_{t,k}^{\mathrm{f}},\tilde\delta_{t,k}^{\mathrm{f}},1\}\mid s_{t,k},a_{t,k}=\{1,0\}\big)=f(1)\bar \lambda_k\\
    &\mathrm{Pr}\big(\{0,\tilde\delta_{t,k}^{\mathrm{f}},\tilde\delta_{t,k}^{\mathrm{}},1\}\mid s_{t,k},a_{t,k}=\{1,0\}\big)=\bar f(1)\lambda_k\\
    &\mathrm{Pr}\big(\{\tilde\delta_{t,k}^{\mathrm{f}},\tilde\delta_{t,k}^{\mathrm{f}},\tilde\delta_{t,k}^{\mathrm{}},1\}\mid s_{t,k},a_{t,k}=\{1,0\}\big)=\bar f(1)\bar \lambda_k\\ 
    &\mathrm{Pr}\big(\{0,\tilde\delta_{t,k}^{\mathrm{p}},\tilde\delta_{t,k}^{\mathrm{p}},x_{t,k}+1\}\mid s_{t,k},a_{t,k}=\{0,1\}\big)=f(x_{t,k}+1)\lambda_k\\
    &\mathrm{Pr}\big(\{\tilde\delta_{t,k}^{\mathrm{f}},\tilde\delta_{t,k}^{\mathrm{p}},\tilde\delta_{t,k}^{\mathrm{p}},x_{t,k}+1\}\mid s_{t,k},a_{t,k}=\{0,1\}\big)=f(x_{t,k}+1)\bar \lambda_k\\
    &\mathrm{Pr}\big(\{0,\tilde\delta_{t,k}^{\mathrm{p}},\tilde\delta_{t,k}^{\mathrm{}},x_{t,k}+1\}\mid s_{t,k},a_{t,k}=\{0,1\}\big)=\bar f(x_{t,k}+1)\lambda_k\\\label{prt8}
    &\mathrm{Pr}\big(\{\tilde\delta_{t,k}^{\mathrm{f}},\tilde\delta_{t,k}^{\mathrm{p}},\tilde\delta_{t,k}^{\mathrm{}},x_{t,k}+1\}\mid s_{t,k},a_{t,k}=\{0,1\}\big)=\bar f(x_{t,k}+1)\bar \lambda_k\\\label{prt9}
    &\mathrm{Pr}\big(\{0,\tilde\delta_{t,k}^{\mathrm{p}},\tilde\delta_{t,k}^{\mathrm{}},x_{t,k}\}\mid s_{t,k},a_{t,k}=\{0,0\}\big)=\lambda_k\\\label{prt10}
    &\mathrm{Pr}\big(\{\tilde\delta_{t,k}^{\mathrm{f}},\tilde\delta_{t,k}^{\mathrm{p}},\tilde\delta_{t,k}^{\mathrm{}},x_{t,k}\}\mid s_{t,k},a_{t,k}=\{0,0\}\big)=\bar \lambda_k,
    \end{alignat}
\end{subequations}
\end{spacing}
\noindent whereas the other cases are zero. The state transition probabilities for the generate-at-will source $k\in\mathcal{J}$ are obtained by substituting $\lambda_k=1$ in \eqref{prt}. %As the transmitter can send one packet at each slot, the state transition probability of one source is calculated with \eqref{prt1}-\eqref{prt8} and the other sources' state transition probabilities are calculated with \eqref{prt9}-\eqref{prt10}.
%We can calculate the system state transition probabilities for a given state and action by multiplying the state transition probabilities of all sources as $\mathcal{P}(s'\mid s,a)=\Pi_{k\in\mathcal{K}}\mathrm{Pr}(s_{t+1,k}\mid s_{t,k},a_{t,k})$.

%\begin{theorem}\label{thrm1}
%The defined CMDP is unichain.
%\end{theorem}
%\begin{proof}
%The proof is presented in Appendix \ref{unichain}.
%\end{proof}

Let $\pi$ denote a policy that determines the action taken at each state. A stationary randomized policy is mapping from each state to a distribution over actions, ${\pi(a\mid s): \mathcal{S}\times \mathcal{A} \rightarrow [0,1],~ \sum_{a \in \mathcal{A}_s} \pi(a\mid s) =1}$. A (stationary) deterministic policy chooses an action at a given state with probability one, which is a special case of the stationary randomized policy. With a slight abuse of notation, we denote the action taken in state $s$ by a deterministic policy $\pi$ with $\pi(s)$. %We use the common assumption (see, e.g., \cite{shr,bz,kr}) that the set of available policies is restricted to those that induce a unichain Markov chain, i.e., $\pi\in\Pi_{\mathrm{U}}$.
%\red{Because the CMDP is unichain, the set of stationary randomized policies, denoted by $\Pi_{\mathrm{U}}$, are dominant\footnote{A set of policies is said to be dominant if searching for the optimal policy within this set is equivalent to searching among all possible policies \cite[Definition 2.2, Lemma 4.1]{elman}.} \cite[Theorem 4.1]{elman} and thus, we consider policies $\pi\in\Pi_{\mathrm{U}}$.}
Let ${\bar{\tau}^{\pi}=\underset{T\rightarrow \infty}{\limisup} \frac{1}{T} \sum_{t=1}^{T}\mathbb{E} \{{c(a_t)}\mid s_0\}}$ denote the average number of transmissions (see \eqref{C1}), obtained under policy $\pi$ starting from the initial state $s_0$. Let ${\bar{\delta}^{\pi}=\underset{T\rightarrow \infty}{\limisup}\frac{1}{T}\sum_{t=1}^{T}\mathbb{E} \{d(s_t)\mid s_0\}}$ denote average AoI (see \eqref{O1}), obtained under policy $\pi$ starting from the initial state $s_0$. %Given the unichain assumption, the average costs, $\bar{\tau}^{\pi}$ and $\bar{\delta}^{\pi}$, are independent of the initial state ($s_0$) \cite[Proposition 8.2.1]{puterman1994}.
%Having defined the CMDP elements corresponding to problem \eqref{p1}, the equivalent CMDP problem is formulated as
Having constructed the CMDP, problem \eqref{p1} is equivalently cast as the CMDP problem
\begin{equation}\label{pcmdp01}
\begin{array}{ll}
\displaystyle\amin_{\pi} \quad & \bar{\tau}^{\pi}\\
\mbox{subject to} \quad & \bar{\delta}^{\pi}\leq {\Delta}^{\mathrm{max}}%\\& x_{t+1,k}\leq x^{\mathrm{max}},~ k\in \mathcal{K},~t\in \mathbb{N}
.
\end{array}
\end{equation}
An optimal policy that solves CMDP problem \eqref{pcmdp01} is denoted with $\pi^*$, and the optimal value of the problem is denoted with $\bar\tau^*$.

{Similarly to \cite{shr,bz,kr}, to solve problem \eqref{pcmdp01}, we need to make extra assumptions about the CMDP structure. Specifically, we assume that given the initial state ($s_0$), all policies will induce a Markov chain with only one recurrent class and a (possibly empty) set of transient states. This assumption makes problem \eqref{pcmdp01} well-posed so that we can use the tools associated with the unichain MDPs, as described in the next section. 
}

\subsection{Solution of the CMDP Problem}
%In this subsection, we prove that there exists an optimal stationary policy for CMDP problem \eqref{pcmdp01}, then using a low-complexity approach, we propose a near to optimal deterministic transmission policy.
%According to \cite[Theorem 4.4]{elman}, if a CMDP is unichain, there is at least one stationary policy that solves the CMDP problem.
{To solve CMDP problem \eqref{pcmdp01}, we apply the Lagrangian relaxation method to transform the CMDP problem to an (unconstrained) MDP problem, parametrized by a Lagrange dual variable \cite[Sec. 3.3]{elman}. {In comparison to the CMDP problem, the MDP problem has only one cost function that is defined as ${L(s,a,\beta)=c(a_t)+\beta d(s_t)}$, whereas the other elements, i.e., the state space, action space, and state transition probabilities, are the same.}
Let ${\bar{L}(\pi, \beta)= \limsup_{T \rightarrow \infty} \frac{1}{T}\sum_{t=1}^{T}\mathbb{E} \{c(a_t) + \beta{\big(d(s_t)-\Delta^{\mathrm{max}}\big)}\}}$ denote the Lagrangian corresponding to CMDP problem \eqref{pcmdp01}, where $\beta$ is the Lagrangian multiplier. {Following the standard Lagrangian relaxation procedure, we restrict to the set of deterministic policies and construct the following MDP problem associated with the CMDP problem \eqref{pcmdp01}}}
\begin{equation}\label{plagl}
\begin{aligned}
\amin_{\pi{\in\Pi_{\mathrm{D}}}} \quad &\bar{L}(\pi, \beta)
,\\
\end{aligned}
\end{equation}
{where $\Pi_{\mathrm{D}}$ is the set of all deterministic policies. %We restrict to the set of deterministic policies without loss of optimality, because there always exists an optimal deterministic policy to the MDP problem {over unichain policies} \cite[Page 370]{puterman1994}.
Let $\pi_{\beta}^*$ denote an optimal policy that solves problem \eqref{plagl} for a given $\beta$, which is called a \textit{$\beta$-optimal} policy.}

The following remark expresses the relation between the optimal values of CMDP problem \eqref{pcmdp01} and the MDP problem \eqref{plagl}.

{\remark{The cost function in the objective of CMDP problem \eqref{pcmdp01} is bounded below, i.e., $c(a_t)\geq0$ for all $t\in\mathbb{N}$. Moreover, the state space, $\mathcal{S}$, is finite. Therefore, the two conditions in \cite[Corollary 12.2]{elman} are satisfied in our CMDP formulation, and we have
\begin{equation}\label{rm1eq}
    \bar\tau^{*} = \sup_{\beta\geq 0}\min_{\pi\in\Pi_{\mathrm{D}}}\bar{L}(\pi, \beta).
\end{equation}
}\label{rm1}}

According to Remark \ref{rm1}, the optimal value of CMDP problem \eqref{pcmdp01}, $\bar\tau^{*}$, is obtained via the solution of the right hand side of \eqref{rm1eq}, which means finding the optimal Lagrangian multiplier $\beta^*$ and its corresponding $\beta^*$-optimal policy, ${\pi^*_{\beta^*}}$. % and an algorithm to find optimal $\beta$.% Then, we run these algorithms iteratively in such way that equality \eqref{rm1eq} is satisfied. %In the following, we propose these algorithms.
If policy ${\pi^*_{\beta^*}}$ satisfies the constraint of CMDP problem \eqref{pcmdp01} with equality, i.e., ${\bar{\delta}^{\pi^*_{\beta^*}}= {\Delta}^{\mathrm{max}}}$, then ${\pi^*_{\beta^*}}$ is an optimal policy for the CMDP problem, i.e., $\pi^* = {\pi^*_{\beta^*}}$. However, due to the discrete nature of the action space, in general, there is no guarantee that ${\pi^*_{\beta^*}}$ satisfies the constraint with equality. 
To elaborate this further, the following remark presents the structure of an optimal policy $\pi^*$.
{\remark{An optimal policy for CMDP problem \eqref{pcmdp01}, $\pi^*$, is a randomized mixture of two deterministic $\tilde\beta$-optimal policies, from which one policy satisfies the constraint and the other one violates it. The two policies are mixed with a randomization factor such that the obtained optimal policy satisfies ${\bar{\delta}^{\pi^*}= {\Delta}^{\mathrm{max}}}$ \cite{beutler,bz,gz}.}\label{rmk2}
}

According to Remark \ref{rmk2}, two deterministic $\tilde\beta$-optimal policies and the optimal randomization factor to mix between these policies should be found to obtain an optimal policy, $\pi^*$. However, finding these becomes readily computationally intractable even for the moderate number of states, especially because oftentimes, the optimal randomization factor can be found only numerically \cite[Section 3.2]{nt}. Therefore, in order to solve CMDP problem \eqref{pcmdp01}, we propose a practical deterministic policy, which is numerically shown to provide near-optimal performance in Section \ref{nrds}. More specifically, we develop an iterative algorithm based on bisection and the relative value iteration algorithm (RVIA), as summarized in Algorithm \ref{Acmdp}. In brief, at each iteration, we find a $\beta$-optimal policy for a given $\beta$ via the RVIA and subsequently update $\beta$ according to the bisection rule. The iterative procedure continues until the best $\beta$-optimal policy among the feasible $\beta$-optimal policies is found. In the next two subsections, we delve into details of this procedure.   %More specifically, we develop an iterative algorithm in which at each iteration we derive the $\beta$-optimal policy and update $\beta$ with the derived $\beta$-optimal policy. The iterations are continued in a way that the best $\beta$-optimal policy among the feasible $\beta$-optimal policies is selected. In the following, at first we propose a relative value iteration algorithm (RVIA) to find the $\beta$-optimal policy. Then, using bisection algorithm, we propose an algorithm to find the optimal Lagrangian multiplier. The iterative algorithm including RVIA and the bisection algorithm is presented in Algorithm \ref{Acmdp}.

\subsubsection{Algorithm to Find a $\beta$-optimal Policy} {To obtain an optimal policy $\pi_{\beta}^*$ for a given $\beta$, we solve the MDP problem \eqref{plagl} via RVIA. %\red{In comparison to the CMDP problem, the MDP problem has only one cost function that is defined as ${L(s,a,\beta)=c(a_t)+\beta d(s_t)}$, whereas the other elements, i.e., the state space, action space, and state transition probabilities, are the same. As the MDP has the same state transition probabilities as the CMDP, the MDP is unichain.} 
By \cite[Theorem 8.4.3]{puterman1994}, there exists a relative value function $h(s),~s\in\mathcal{S}$, that satisfies}
\begin{equation}\label{eq3}
    \bar{L}^*( \beta) + h(s) = \min_{a\in \mathcal{A}_s}\big[L(s,a, \beta) + \sum_{s'\in \mathcal{S}}\mathrm{Pr}(s' \mid s,a)h(s')  \big],~ \text{for all }s\in \mathcal{S},
\end{equation}
where $\bar{L}^*( \beta)$ is the optimal value of the MDP problem \eqref{plagl} for a given $\beta$, defined as ${\bar{L}^*( \beta)=\min_{\pi\in\Pi_{\mathrm{D}}}\bar{L}(\pi, \beta)}$.
Subsequently, the $\beta$-optimal policy, $\pi_{\beta}^*$, is obtained as \cite[Theorem 8.4.4]{puterman1994}
\begin{equation}\label{qopt}
    \pi_{\beta}^*(s) = \argmin_{a\in \mathcal{A}_s} \big[L(s,a, \beta) + \sum_{s'\in \mathcal{S}}\mathrm{Pr}(s' \mid s,a)h(s') \big],~ \text{for all }{s} \in \mathcal{S}.
\end{equation}

To obtain the $\beta$-optimal policy, we use the RVIA, in which the relative value function for all states $s\in\mathcal{S}$ at each iteration $i\in\{0,1,\dots\}$ is updated as $h^{i}(s) = v^{i}(s)-v^{i}(s^{\mathrm{ref}})$.
Where $s^{\mathrm{ref}}\in\mathcal{S}$ is an arbitrary reference state which remains unchanged throughout the iterations. The term $v^{i}(s)$, called value function, is obtained at each iteration as
\begin{equation}\label{rvif}
    \begin{array}{ll}
         v^{i}(s)=\min_{a\in \mathcal{A}_s}\big[L(s,a, \beta) +\sum_{s'\in \mathcal{S}}\mathrm{Pr}(s' \mid s,a)h^{i-1}(s')\big].
    \end{array}
\end{equation}
%{\propsn{
For any state $s\in \mathcal{S}$ and initialization $v^0(s)$, the sequences $\{h^i(s)\}_{i=1,2,\dots}$ and $\{v^i(s)\}_{i=1,2,\dots}$ converge, i.e., $\lim_{i\rightarrow\infty} h^i(s) = h(s)$ and $\lim_{i\rightarrow\infty} v^i(s) = v(s)$.
%\begin{proof}
%According to \cite[Proposition 4.3.2]{bertks}, it is sufficient to show that the MDP under every deterministic policy induces a Markov chain that is unichain and aperiodic. In Theorem \ref{thrm1}, it was proven that the MDP is unichain.
%\end{proof}
%}
The RVI algorithm to find a $\beta$-optimal policy is presented in  Steps $3$-$12$ of Algorithm \ref{Acmdp}. After the convergence of RVIA, i.e., convergence of the relative value function, $h(\cdot)$, and the value function, $v(\cdot)$ (see Steps $3$-$9$ in Algorithm \ref{Acmdp}), we obtain the $\beta$-optimal policy, $\pi^*_\beta$, according to \eqref{qopt} (see Steps $10$-$12$ in Algorithm \ref{Acmdp}). It is worth noting that the optimal value of the MDP problem \eqref{plagl} for a given $\beta$ is given by ${\bar L^*( \beta)=v(s^{\mathrm{ref}})}$.

\subsubsection{Algorithm to Find the Optimal Lagrangian Multiplier} According to \cite[Lemma 3.1]{beutler}, for a given ${\beta\text{-optimal}}$ policy (${\pi^*_{\beta}}$), the objective function of the CMDP problem, $\bar{\tau}^{\pi^*_{\beta}}$, and the objective function of the MDP problem, $\bar{L}^{*}(\beta)$, are increasing in $\beta$, while the constraint of the CMDP problem, $\bar{\delta}^{\pi^*_{\beta}}$, is decreasing in $\beta$. Therefore, we are interested in the smallest Lagrangian multiplier that satisfies the constraint in CMDP problem \eqref{pcmdp01}, defined as
\begin{equation}\label{betastar}
    \begin{aligned}
    \tilde\beta \triangleq\inf~\{\beta\geq 0 \mid {\bar{\delta}^{\pi^*_{\beta}}}\leq \Delta^{\mathrm{max}}\}.
    \end{aligned}
\end{equation}
%If the obtained policy based on expression \eqref{betastar} satisfies ${\bar{\delta}^{\pi^*_{\beta^*}}}= \Delta^{\mathrm{max}}$, then ${\pi^*_{\beta^*}}$ is the optimal policy to solve the CMDP problem. Otherwise, the optimal policy, ${\pi^*}$, is a mixture of two deterministic policies \cite[Theorem 4.4]{elman}. The two deterministic policies are given as
%\begin{equation}
%    \pi^*_{\beta^{*-}}\triangleq \lim_{\beta\rightarrow \beta^{*-}}\pi^*_{\beta},~\mathrmand}~\pi^*_{\beta^{*+}}\triangleq \lim_{\beta\rightarrow \beta^{*+}}\pi^*_{\beta}.
%\end{equation}
%These two policies are mixed based on randomization factor, $\mu$, in which at each state $s\in\mathcal{S}$, the transmitter selects action $a=\pi^*_{\beta^{*+}}(s)$ with probability $\mu$ and action $a=\pi^*_{\beta^{*-}}(s)$ with probability $1-\mu$. The optimal policy $\pi^*$ is defined as
%\begin{equation}
%    \pi^* \triangleq \mu \pi^*_{\beta^{*+}} + (1-\mu)\pi^*_{\beta^{*-}}.
%\end{equation}
%Finding the randomization factor $\mu$ is non-trivial \cite[Sec. 3.2]{nt}. Therefore, we use exhaustive search to find the randomization factor.
 %The bisection algorithm is presented in 

%We search for $\beta_{\mathrm{u}}$ and $\beta_{\mathrm{l}}$ by the bisection algorithm.
To search for $\tilde\beta$, we use the bisection algorithm which takes advantage of the monotonicity of $\bar{\delta}^{\pi^*_{\beta}}$ with respect to $\beta$, as presented in Algorithm \ref{Acmdp} (see Steps $1$-$18$). We initialize the bisection algorithm with $\beta_{\mathrm{u}}$ and $\beta_{\mathrm{l}}$ in such a way that ${\bar{\delta}^{\pi^*_{\beta_{\mathrm{u}}}}}\leq \Delta^{\mathrm{max}}$ and ${\bar{\delta}^{\pi^*_{\beta_{\mathrm{l}}}}}\geq \Delta^{\mathrm{max}}$, which also implies $\beta_{\mathrm{u}}\geq\beta_{\mathrm{l}}$. The algorithm termination criterion is $\beta_{\mathrm{u}}-\beta_{\mathrm{l}}<\kappa$, where $\kappa$ is a sufficiently small constant.
After termination of the bisection algorithm, we set $\tilde\beta = \beta_{\mathrm{u}}$ and obtain the best feasible $\beta$-optimal policy as $\pi^*_{\tilde\beta} = \pi^*_{\beta_{\mathrm{u}}}$. Moreover, the algorithm returns the infeasible policy associated with $\beta_{\mathrm{l}}$, which represents a lower-bound to an optimal solution of \eqref{pcmdp01}.

%The proposed stationary deterministic policy to solve the CMDP problem consisting of RVI and bisection algorithm is presented in Algorithm \ref{Acmdp}.
\begin{algorithm}[t]
    \SetKwInOut{Inputi}{Initialize}
    %\KwIn{Input}{Input}
    \SetKwInOut{output}{Output}
    \SetKwInOut{Output}{Output}
    \SetKwComment{Comment}{/*}{ }
    \KwIn{1) System parameters: $\Delta^{\mathrm{max}}$, $f(\cdot)$, and $\lambda_k$ for all $k\in \mathcal{I}$, 2) RVI parameters: $s^{\mathrm{ref}}$, $\epsilon$, and 3) Bisection parameters: $\beta_{\mathrm{u}}$, $\beta_{\mathrm{l}}$, $\kappa$}
    \Comment{Bisection algorithm}
    \While{$ \beta_{\mathrm{u}}-\beta_{\mathrm{l}}\geq\kappa$}
    {
    $\bar{\beta} = \frac{\beta_{\mathrm{u}}+\beta_{\mathrm{l}}}{2}$\\
    \Comment{RVIA for the given $\bar \beta$}
    \Inputi{1) $i=1$, 2) set $h^0(s) =1, h^1(s) =0, v^0(s) = 0$ for all $s\in\mathcal{S}$}
    \While{$\max_{s\in\mathcal{S}}|h^{i}(s)-h^{i-1}(s)|\geq\epsilon$}{
    $i = i+1$\\
    \For{$s\in\mathcal{S}$}{$v^{i}(s)=\min_{a\in \mathcal{A}_s}\big[L(s,a, \bar\beta) +\sum_{s'\in \mathcal{S}}\mathrm{Pr}(s' \mid s,a)h^{i-1}(s')\big]$\\
    $h^{i}(s) = v^{i}(s)-v^{i}(s^{\mathrm{ref}})$}
    }
    
    \Comment{An optimal policy for given $\bar \beta$}
    \For{$s\in\mathcal{S}$}{$\pi^*_{\bar\beta}(s)=\argmin_{a\in \mathcal{A}_s}\big[L(s,a, \bar\beta) +\sum_{s'\in \mathcal{S}}\mathrm{Pr}(s' \mid s,a)h^i(s')\big]$}
    \eIf{$\bar\delta^{\pi^*_{\bar\beta}}\leq \Delta^{\mathrm{max}}$}
    {
    $\beta_{\mathrm{u}} = \bar\beta$\\
    %$\pi^*_{\beta_{\mathrm{u}}} = \pi^*_{\bar\beta}$
    }
    {
    $\beta_{\mathrm{l}} = \bar\beta$\\
    %$\pi^*_{\beta_{\mathrm{l}}} = \pi^*_{\bar\beta}$
    }
    }
    \KwOut{Lagrangian multiplier: $\tilde\beta = \beta_{\mathrm{u}}$, feasible policy: $\pi^*_{\tilde\beta} = \pi^*_{\beta_{\mathrm{u}}}$, (infeasible) lower-bound policy: $\pi^*_{\beta_{\mathrm{l}}}$} %$\pi^*_{\beta^{*-}} = \pi^*_{\beta_{\mathrm{l}}}$
    %Do an exhaustive search to find $\mu$ that results $\mu\bar\delta^{\pi^*_{\beta^{*+}}}+(1-\mu)\bar\delta^{\pi^*_{\beta^{*-}}}=\Delta^{\mathrm{max}}$\\
%    \Comment{Run the derived policy}
%  \For {$t\in\mathbb{N}$}
%      {Step 1: Update $\delta_{t,k}^{\mathrm{f}}$ using \eqref{ageb}\\
%        Step 2: Select an action $a_t=$%\begin{cases}
%        \pi^*_{\tilde\beta}(s_t)%&\textrm{ with probability } \mu\\
        %\pi^*_{\beta^{*-}}(s_t)&\textrm{ with probability }1-\mu
      %\end{cases}$
%      \\
%        Step 3: Update $x_{t+1,k}$ using \eqref{trr}, $\delta_{t+1,k}^{\mathrm{p}}$ using \eqref{ager}, $\delta_{t+1,k}$ using \eqref{age1}}
    \caption{The deterministic transmission policy to solve CMDP problem \eqref{pcmdp01}}
    \label{Acmdp}
\end{algorithm}

%%%%%%%%%%%%%%%%%%%%%%%%%%%%%%%%%%%%%%%%%%%%%%%%
\section{Lyapunov-based Transmission Scheduling Policies}\label{dpp}
In this section, we use the Lyapunov optimization theory to derive a solution for problem \eqref{p1}. 
According to the deterministic transmission policy (Algorithm \ref{Acmdp}), RVIA needs to explore all states and actions. When the number of sources increases, the number of states grows exponentially. Besides this, the RVIA is run for each bisection iteration. These make obtaining the policy computationally inefficient.
Therefore, we develop a low-complexity dynamic transmission (LC-DT) policy using the DPP method for the known environment in Section \ref{known}. The numerical results in Section \ref{nrds} show that LC-DT policy performs close to the optimal solution.

Furthermore, we develop a Lyapunov-based transmission policy for the unknown environment in Section \ref{unknown}. We transform CMDP problem \eqref{pcmdp01} into an MDP problem using the DPP method. Then, by using deep Q-learning (DQL), we solve the MDP problem and provide the learning-based transmission policy.

%%%%%%%%%%%%%%%%%%%%%%%%%%
\subsection{Low-complexity Dynamic Transmission Policy: Known Environment}\label{known}
%In this subsection, we propose a dynamic transmission policy as a solution for problem \eqref{p1} in the known environment. %, i.e., when the transmitter knows $\lambda_k$ for each source and the probability of successful decoding function, $f(\cdot)$.
%{In this subsection,} 
We use the Lyapunov drift-plus-penalty (DPP) method \cite{lyp}, where average AoI constraint \eqref{p1:1} is enforced by transforming it into a queue stability constraint. In particular, the constraint is mapped into a virtual queue so that the stability of the virtual queue implies the feasibility of the constraint.

Let $Q_t$ denote the virtual queue associated with average AoI constraint \eqref{p1:1} at slot $t$.
The virtual queue evolves as follows:
\begin{equation}\label{vq1}
\begin{array}{ll}
    Q_{t+1} = \mathrm{max}\{ Q_{t}  - {\Delta}^{\mathrm{max}}+ \hat{\delta}_{t+1}, 0 \}.
    \end{array}
\end{equation}

To ensure that average AoI constraint \eqref{p1:1} is satisfied, we use the notion of strong stability, where the virtual queue is (strongly) stable if $\underset{T\rightarrow\infty}{\lim}\frac{1}{T}\sum_{t=1}^{T}\mathbb{E}\{Q_{t}\}<\infty$ \cite[Chapter 2]{lyp}. According to the DPP method, the strong stability of queue \eqref{vq1} implies that average AoI constraint \eqref{p1:1} is satisfied \cite[Chapter 4]{lyp}.

To define the queue stability condition, we introduce the Lyapunov function and its drift. A quadratic Lyapunov function is defined as ${L(Q_{t}) = \frac{1}{2}Q_{t}^2}$ \cite[Chapter 3]{lyp}.
% \begin{equation}\label{sc}
%     \begin{array}{ll}
%           \underset{T\rightarrow\infty}{\lim}\frac{1}{T}\sum_{t=1}^{T}\mathbb{E}\{Q_{t}\}<\infty.
%     \end{array}
% \end{equation}
%%By minimizing an upper bound of the DPP function we ensure the queue stability condition \eqref{sc} \cite[Chapter 4]{lyp}. To construct the DPP function, we use the quadratic Lyapunov function ${L(Q_{t}) = \frac{1}{2}Q_{t}^2}$ \cite[Chapter 3]{lyp}. 
%\begin{equation}\label{ql}
%\begin{array}{ll}
%    L(Q_{t}) = \frac{1}{2}Q_{t}^2.
%\end{array}
%\end{equation}
%The Lyapunov function indicates the congestion of a (virtual) queueing network. 
The Lyapunov function measures the network congestion and thus, by minimizing the expected change of the Lyapunov function from one slot to the next slot, the virtual queue can be stabilized  \cite[Chapter 4]{lyp}.

Let $o_{t} = \big\{\{\delta_{t,k}^{\mathrm{f}}, \delta_{t,k}^{\mathrm{p}},\delta_{t,k},x_{t,k}\}_{k\in\mathcal{K}},Q_{t}\big\}$ denote the network state at slot $t$.
The conditional Lyapunov drift, $\alpha(o_{t})$, is defined as the expected change in the Lyapunov function over one slot, given the network state at slot $t$, i.e., $
\alpha(o_{t}) = \mathbb{E}\big \{L(Q_{t+1}) -L(Q_{t})\mid o_t\big \}$  \cite[Chapter 4]{lyp}.

By following the DPP minimization approach \cite[Chapter 3]{lyp}, a solution for \eqref{p1} can be derived by solving the following problem at each slot $t$
\begin{spacing}{1.3}\vspace{-5mm}
\begin{subequations}\label{lyp1}
\begin{alignat}{2}
%\{a_t\}_{t=1,2,\ldots}
\amin
%\amin_{\{a_{t,k}\}_{t=1,2,\ldots}}
\quad & V\mathbb{E}\{\tau_{t}\mid o_t \}+\alpha(o_{t})\label{lyp1:m}\\
\mbox{subject to}\quad& x_{t+1,k}\leq x^{\mathrm{max}},~ k\in\mathcal{K}\label{lyp1:2}\\
&\textstyle\sum_{k\in \mathcal{K}}u_{t,k} + r_{t,k}\leq1 \label{lyp1:3}\\
&u_{t,k},r_{t,k}\in \{0,1\},~ k\in \mathcal{K},\label{lyp1:4}
\end{alignat}
\end{subequations}
\end{spacing}
\noindent with variables $\{u_{t,k},r_{t,k}\}_{k\in\mathcal{K}}$. The objective function of problem \eqref{lyp1} represents the DPP function, in which the positive parameter $V$ is used to adjust the tradeoff between minimizing original objective function \eqref{p1:m} and the size of the (virtual) queue backlog. A larger value of $V$ puts more emphasis on the original objective function, i.e., minimizing the average number of transmissions.

%Working with objective function \eqref{lyp1:m} is difficult due to the presence of the max[·] function in the conditional Lyapunov drift.
According to the standard procedure used in the DPP method, an upper bound for the drift part is derived, whereas the penalty part (i.e., the original objective function) remains unchanged \cite[Chapter 4]{lyp}. It is worth stressing that using such upper bound of the conditional Lyapunov drift in the optimization procedure does not affect the virtual queue’s stabilizing logic. We derive the upper bound using the following inequality, where for any $\tilde\gamma \geq 0$, $\bar\gamma \geq 0$, and $\hat\gamma \geq 0$, we have \cite{mmpower}
\begin{equation}\label{upq}
\begin{array}{ll}
\big(\mathrm{max}\{\tilde\gamma  - \bar\gamma+ \hat\gamma , 0\}\big)^2 \leq \tilde\gamma  ^2 +  \bar\gamma ^2 + \hat\gamma  ^2 +2\tilde\gamma (\hat\gamma  - \bar\gamma).
\end{array}
\end{equation}

By applying \eqref{upq} to \eqref{vq1}, an upper bound for $Q^2_{t+1}$ is given as
\begin{equation}\label{uneq}
\begin{array}{ll}
& Q^2_{t+1} \leq Q^2_{t} + ({\Delta}^{\mathrm{max}})^2 + \hat{\delta}^2_{t + 1} + 2Q_t \big(\hat{\delta}_{t + 1} - {\Delta}^{\mathrm{max}} \big).
\end{array}
\end{equation}

By applying \eqref{uneq} to (15), the upper bound of objective function \eqref{lyp1:m} is given as
\begin{equation}\label{eqQ}
\begin{array}{ll}
   &V\mathbb{E}\{\tau_{t}\mid o_t \} + \alpha(o_{t})\leq  V\sum_{k\in \mathcal{K}}\mathbb{E}\{u_{t,k}\mid o_t\}+\mathbb{E}\{r_{t,k}\mid o_t\}  + \frac{1}{2}\mathbb{E}\big\{  ({\Delta}^{\mathrm{max}})^2 + \hat{\delta}^2_{t + 1} \\
   &+ 2Q_{t} \big(\hat{\delta}_{t + 1} - {\Delta}^{\mathrm{max}} \big)\mid o_t \big\}= V\sum_{k\in \mathcal{K}}\mathbb{E}\{u_{t,k}\mid o_t\}+ \mathbb{E}\{r_{t,k}\mid o_t\}\\
   &\hspace{1mm}+\frac{1}{2} \big(({\Delta}^{\mathrm{max}})^2 + \mathbb{E}\{ \hat{\delta}^2_{t + 1} \mid o_t\}+2Q_{t} \big(\mathbb{E}\{\hat{\delta}_{t + 1}\mid o_t\} - {\Delta}^{\mathrm{max}} \big).
\end{array}
\end{equation}

To complete the derivation of \eqref{eqQ}, we calculate $\mathbb{E}\{\hat\delta_{t + 1}\mid o_t\}$ and $\mathbb{E}\{\hat\delta_{t + 1}^2\mid o_t\}$, which are given by the following lemmas.
\begin{lemma}\label{lm1}
The conditional expectation $\mathbb{E}\{\hat\delta_{t + 1}\mid o_t\}$ is given as
\begin{equation}\label{eaoi2}
    \begin{array}{ll}
    
    \mathbb{E}\big\{\hat\delta_{t + 1}\mid o_t\big\} \hspace{-0mm}=\frac{1}{K}\sum_{k\in\mathcal{K}}\mathbb{E}\{u_{t,k}\mid o_t\}f(1)\tilde\delta_{t,k}^{\mathrm{f}} + \mathbb{E}\{r_{t,k}\mid o_t\}f(x_{t,k}+1)\tilde\delta_{t,k}^{\mathrm{p}}\\
    \hspace{1mm}+\big[1-f(1)\mathbb{E}\{u_{t,k}\mid o_t\}-f(x_{t,k}+1)\mathbb{E}\{r_{t,k}\mid o_t\}\big]\tilde\delta_{t,k}.
    
    \end{array}
\end{equation}
\end{lemma}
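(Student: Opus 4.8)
The plan is to compute the one-slot conditional expectation source by source, i.e.\ to evaluate $\mathbb{E}\{\delta_{t+1,k}\mid o_t\}$ for each $k\in\mathcal{K}$ and then assemble $\mathbb{E}\{\hat\delta_{t+1}\mid o_t\}=\tfrac1K\sum_{k\in\mathcal{K}}\mathbb{E}\{\delta_{t+1,k}\mid o_t\}$ by linearity, using definition \eqref{avokaoi}. First I would rewrite the five-case recursion \eqref{age1}: since its conditions are mutually exclusive and collectively exhaustive, and since $\sum_{k\in\mathcal{K}}(u_{t,k}+r_{t,k})\leq 1$ forces $u_{t,k}d_t$ and $r_{t,k}d_t$ to be two disjoint $\{0,1\}$-valued indicators, the recursion collapses into the single algebraic identity $\delta_{t+1,k}=u_{t,k}d_t\,\tilde\delta_{t,k}^{\mathrm{f}}+r_{t,k}d_t\,\tilde\delta_{t,k}^{\mathrm{p}}+(1-u_{t,k}d_t-r_{t,k}d_t)\,\tilde\delta_{t,k}$, in which the last bracket is itself the $\{0,1\}$-indicator of the event ``retransmission/idle, or a failed transmission.''

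Conditioned on the network state $o_t=\big\{\{\delta_{t,k}^{\mathrm{f}},\delta_{t,k}^{\mathrm{p}},\delta_{t,k},x_{t,k}\}_{k\in\mathcal{K}},Q_t\big\}$, the quantities $\tilde\delta_{t,k}^{\mathrm{f}}$, $\tilde\delta_{t,k}^{\mathrm{p}}$, and $\tilde\delta_{t,k}$ are deterministic, so applying $\mathbb{E}\{\,\cdot\mid o_t\}$ to the identity above leaves only $\mathbb{E}\{u_{t,k}d_t\mid o_t\}$ and $\mathbb{E}\{r_{t,k}d_t\mid o_t\}$ to be determined. For these I would condition further on the realized action $a_{t,k}=\{u_{t,k},r_{t,k}\}$ and use the tower property: given $o_t$ and the action, the decoding outcome $d_t$ is independent of everything else, with success probability $f(1)$ when a fresh packet of source $k$ is transmitted (so that $x_{t+1,k}=1$) and $f(x_{t,k}+1)$ when its under-process packet is retransmitted, by the HARQ decoding model. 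Since $u_{t,k},r_{t,k}\in\{0,1\}$, this gives $\mathbb{E}\{u_{t,k}d_t\mid o_t\}=f(1)\,\mathbb{E}\{u_{t,k}\mid o_t\}$ and $\mathbb{E}\{r_{t,k}d_t\mid o_t\}=f(x_{t,k}+1)\,\mathbb{E}\{r_{t,k}\mid o_t\}$.

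Substituting these two expressions into the conditional expectation of the collapsed recursion, grouping the $\tilde\delta_{t,k}$ terms, and summing over $k\in\mathcal{K}$ with the $\tfrac1K$ prefactor yields exactly \eqref{eaoi2}. The one point needing care---and the \emph{main (though mild) obstacle}---is the evaluation of $\mathbb{E}\{u_{t,k}d_t\mid o_t\}$ under a possibly randomized policy: the action is $o_t$-measurable only as a distribution, so one must condition on the realized action before invoking the channel statistics, and one must use the at-most-one-transmission constraint \eqref{p1:3} to guarantee that the success probability attached to source $k$ is unambiguously $f(1)$ or $f(x_{t,k}+1)$ and never a mixture of the two. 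Everything else is routine bookkeeping.
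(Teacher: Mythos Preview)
Your proposal is correct and mirrors the paper's own proof essentially step for step: the paper also collapses \eqref{age1} into the single identity $\delta_{t+1,k}=u_{t,k}d_t\tilde\delta_{t,k}^{\mathrm{f}}+r_{t,k}d_t\tilde\delta_{t,k}^{\mathrm{p}}+[1-d_t(u_{t,k}+r_{t,k})]\tilde\delta_{t,k}$, takes conditional expectation treating the $\tilde\delta$'s as $o_t$-measurable, and then evaluates $\mathbb{E}\{u_{t,k}d_t\mid o_t\}$ and $\mathbb{E}\{r_{t,k}d_t\mid o_t\}$ via the tower property (stated separately as two sub-lemmas). Your remark about the at-most-one-transmission constraint disambiguating the success probability is a welcome clarification that the paper leaves implicit.
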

\begin{proof}
The proof is presented in Appendix \ref{plm1}. %\red{*The proof is presented in \cite[Appendix~B]{6272299}.*}
\end{proof}
\begin{lemma}\label{lm2}
The conditional expectation $\mathbb{E}\{\hat\delta_{t + 1}^2\mid o_t\}$ is given as
\begin{equation}\label{eaoi2}
    \begin{array}{ll}
    \mathbb{E}\big\{\hat\delta_{t + 1}^2\mid o_t\big\} \hspace{-0mm}=\frac{1}{K^2}\Big[\sum_{k\in\mathcal{K}}\mathbb{E}\{u_{t,k}\mid o_t\}f(1)(\tilde\delta_{t,k}^{\mathrm{f}})^2 + \mathbb{E}\{r_{t,k}\mid o_t\}f(x_{t,k}+1)(\tilde\delta_{t,k}^{\mathrm{p}})^2\\
    \hspace{1mm}+\big[1-f(1)\mathbb{E}\{u_{t,k}\mid o_t\}-f(x_{t,k}+1)\mathbb{E}\{r_{t,k}\mid o_t\}\big](\tilde\delta_{t,k} )^2\\ \hspace{1mm} + \sum_{k\in\mathcal{K}}\sum_{k'\in\mathcal{K},k'\neq k}\mathbb{E}\{u_{t,k}\mid o_t\}f(1)\tilde\delta_{t,k}^{\mathrm{f}}\tilde\delta_{t,k'} + \mathbb{E}\{r_{t,k}\mid o_t\}f(x_{t,k}+1)\tilde\delta_{t,k}^{\mathrm{p}} \tilde\delta_{t,k'} \\ \hspace{1mm}+ \mathbb{E}\{u_{t,k'}\mid o_t\}f(1)\tilde\delta_{t,k'}^{\mathrm{f}}\tilde\delta_{t,k} + \mathbb{E}\{r_{t,k'}\mid o_t\}f(x_{t,k'}+1)\tilde\delta_{t,k'}^{\mathrm{p}} \tilde\delta_{t,k} - \mathbb{E}\{u_{t,k}\mid o_t\} f(1)\tilde\delta_{t,k} \tilde\delta_{t,k'} \\\hspace{1mm}-\mathbb{E}\{r_{t,k}\mid o_t\} f(x_{t,k}+1)\tilde\delta_{t,k} \tilde\delta_{t,k'} -\mathbb{E}\{u_{t,k'}\mid o_t\} f(1)\tilde\delta_{t,k'} \tilde\delta_{t,k} \\
    \hspace{1mm}-\mathbb{E}\{r_{t,k'}\mid o_t\} f(x_{t,k'}+1)\tilde\delta_{t,k'} \tilde\delta_{t,k} +\tilde\delta_{t,k'} \tilde\delta_{t,k}\Big].
    
    \end{array}
\end{equation}
\end{lemma}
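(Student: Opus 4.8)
The plan is to expand the square using \eqref{avokaoi}, writing $\hat\delta_{t+1}^2 = \frac{1}{K^2}\big(\sum_{k\in\mathcal{K}}\delta_{t+1,k}^2 + \sum_{k\in\mathcal{K}}\sum_{k'\in\mathcal{K},\,k'\neq k}\delta_{t+1,k}\delta_{t+1,k'}\big)$ and to take the conditional expectation $\mathbb{E}\{\cdot\mid o_t\}$ of the ``diagonal'' and the ``cross'' terms separately. The structural fact used throughout is that once $o_t$ \emph{and} the action $a_t$ are fixed, the only remaining randomness in the next-slot AoI vector $\{\delta_{t+1,k}\}_{k\in\mathcal{K}}$ is the single decoding outcome $d_t$: by \eqref{age1}, $\delta_{t+1,k}$ is a deterministic function of $o_t$, $a_{t,k}$, and $d_t$ (note that $b_{t+1,k}$ does not enter \eqref{age1}), while $\tilde\delta_{t,k}^{\mathrm{f}}, \tilde\delta_{t,k}^{\mathrm{p}}, \tilde\delta_{t,k}$ and $f(x_{t,k}+1)$ are $o_t$-measurable. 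I would therefore compute everything conditioned on $(o_t,a_t)$ first and then average over the (possibly randomized) action, replacing each binary decision variable $u_{t,k},r_{t,k}$ by its conditional mean $\mathbb{E}\{u_{t,k}\mid o_t\},\mathbb{E}\{r_{t,k}\mid o_t\}$.

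For the diagonal terms I would reuse the single-source reasoning behind Lemma \ref{lm1}: conditioned on $(o_t,a_t)$, \eqref{age1} shows that $\delta_{t+1,k}$ equals $\tilde\delta_{t,k}^{\mathrm{f}}$ with probability $u_{t,k}f(1)$, equals $\tilde\delta_{t,k}^{\mathrm{p}}$ with probability $r_{t,k}f(x_{t,k}+1)$, and equals $\tilde\delta_{t,k}$ with the complementary probability (these events are disjoint because $u_{t,k}+r_{t,k}\le 1$). Squaring, taking expectation, and then averaging over the action gives precisely the first line of \eqref{eaoi2}.

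The cross terms are where constraint \eqref{p1:3} is essential. Given $(o_t,a_t)$, at most one source is served, so at most one of $\delta_{t+1,k},\delta_{t+1,k'}$ is random while the other is a constant; hence $\mathbb{E}\{\delta_{t+1,k}\delta_{t+1,k'}\mid o_t,a_t\}=\mathbb{E}\{\delta_{t+1,k}\mid o_t,a_t\}\,\mathbb{E}\{\delta_{t+1,k'}\mid o_t,a_t\}$. I would write $\mathbb{E}\{\delta_{t+1,k}\mid o_t,a_t\}=\tilde\delta_{t,k}+g_k(a_t)$, where $g_k(a_t)=u_{t,k}f(1)\big(\tilde\delta_{t,k}^{\mathrm{f}}-\tilde\delta_{t,k}\big)+r_{t,k}f(x_{t,k}+1)\big(\tilde\delta_{t,k}^{\mathrm{p}}-\tilde\delta_{t,k}\big)$ is the conditional-on-action form of the expression in Lemma \ref{lm1}, so that the product expands to $\tilde\delta_{t,k}\tilde\delta_{t,k'}+g_k(a_t)\tilde\delta_{t,k'}+g_{k'}(a_t)\tilde\delta_{t,k}+g_k(a_t)g_{k'}(a_t)$. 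The last term vanishes identically for every admissible action, since \eqref{p1:3} forces at least one of $g_k(a_t)$, $g_{k'}(a_t)$ to be zero. Averaging over the action (again replacing the indicators by their conditional means and using that $\tilde\delta_{t,k},\tilde\delta_{t,k'}$ are deterministic given $o_t$), summing over all ordered pairs $k\neq k'$, and expanding $g_k(a_t)\tilde\delta_{t,k'}+g_{k'}(a_t)\tilde\delta_{t,k}$ into its $f(1)$ and $f(x_{t,k}+1)$ contributions yields the remaining lines of \eqref{eaoi2}.

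The only genuinely nontrivial step is the factorization $\mathbb{E}\{\delta_{t+1,k}\delta_{t+1,k'}\mid o_t,a_t\}=\mathbb{E}\{\delta_{t+1,k}\mid o_t,a_t\}\,\mathbb{E}\{\delta_{t+1,k'}\mid o_t,a_t\}$ together with the companion identity $g_k(a_t)g_{k'}(a_t)\equiv 0$; both rest on constraint \eqref{p1:3}, namely that a single decoding event can perturb the next-slot AoI of at most one source. Everything else is elementary expansion and term collection, so that factorization is the part I expect to need the most care to state cleanly.
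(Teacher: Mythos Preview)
Your proposal is correct and follows essentially the same route as the paper: expand $\hat\delta_{t+1}^2$ into diagonal and cross terms, handle the diagonal terms exactly as in Lemma~\ref{lm1} (the paper does this via the algebraic identity \eqref{ages1} together with mutual exclusivity of the cases), and for the cross terms exploit the single-transmission constraint \eqref{p1:3} to kill the mixed products. The only cosmetic difference is that the paper expands $\delta_{t+1,k}\delta_{t+1,k'}$ directly from \eqref{ages1} and then invokes $u_{t,k}u_{t,k'}=u_{t,k}r_{t,k'}=0$, whereas you first condition on $a_t$ and phrase the same cancellation as the factorization $\mathbb{E}\{\delta_{t+1,k}\delta_{t+1,k'}\mid o_t,a_t\}=\mathbb{E}\{\delta_{t+1,k}\mid o_t,a_t\}\,\mathbb{E}\{\delta_{t+1,k'}\mid o_t,a_t\}$ together with $g_k(a_t)g_{k'}(a_t)\equiv 0$; the underlying argument and the resulting expression are identical.
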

\begin{proof}
The proof is presented in Appendix \ref{plm2}.
\end{proof}

% By applying Lemma \ref{lm1} and Lemma \ref{lm2} into \eqref{eqQ}, the upper bound of the objective function of problem \eqref{lyp1} is given as
% \begin{equation}\label{eqQ2}
% \begin{array}{ll}
%   V\mathbb{E}\{\tau_{t}\mid o_t \} + \alpha(o_{t})&\hspace{-3mm}\leq  V\sum_{k\in \mathcal{K}}\big(\mathbb{E}\{u_{t,k}\mid o_t\}+\mathbb{E}\{r_{t,k}\mid o_t\}\big)  \\
%   &\hspace{1mm}+ \frac{1}{2K^2} \Big( \sum_{k\in \mathcal{K}}\mathbb{E}\{\delta_{t + 1,k}^2\mid o_t\}+\sum_{k\in \mathcal{K}}\sum_{k'\in \mathcal{K}, k'\neq k}\mathbb{E}\{\delta_{t+1,k}\delta_{t+1,k'}\mid o_t\}\\
%   &\hspace{1mm}+2KQ_{t} \sum_{k\in \mathcal{K}}\mathbb{E}\{\delta_{t + 1,k}\mid o_t\} \Big )+\frac{1}{2} \big(({\Delta}^{\mathrm{max}})^2 - 2Q_{t} {\Delta}^{\mathrm{max}}\big ).
%   \end{array}
% \end{equation}

Having derived the upper bound of the DPP in \eqref{eqQ}, our goal is to minimize \eqref{eqQ} subject to constraints \eqref{lyp1:2}--\eqref{lyp1:4} with variables $\{ u_{t,k},r_{t,k}\}_{k\in \mathcal{K}}$.
According to the standard procedure, we drop the expectations in \eqref{eqQ} \cite[Page 59]{lyp}. % According to this approach, minimizing \eqref{eqQ2} by ignoring expectations is equivalent to minimizing \eqref{eqQ2} with expectations.
We denote the upper bound of the DPP after dropping the expectations at slot $t$ by $W_t$, which, as shown in Appendix~\ref{pfp}, can be expressed as 
\begin{equation*}\label{eqQ5n}
\begin{array}{ll}
   W_t
    &\hspace{-3mm}= V\sum_{k\in \mathcal{K}}u_{t,k}+r_{t,k} +\frac{1}{2K^2}\Big[\sum_{k\in \mathcal{K}}u_{t,k}f(1) (\tilde\delta_{t,k}^{\mathrm{f}})^2 + r_{t,k}f(x_{t,k}+1)(\tilde\delta_{t,k}^{\mathrm{p}} )^2\\&\hspace{1mm}+\big[ 1-u_{t,k} f(1) - r_{t,k} f(x_{t,k}+1)    \big](\tilde\delta_{t,k})^2+\sum_{k\in \mathcal{K}}\sum_{k'\in \mathcal{K}, k'\neq k}u_{t,k}f(1)\tilde\delta_{t,k}^{\mathrm{f}}\tilde\delta_{t,k'}\\ &\hspace{1mm}+ r_{t,k}f(x_{t,k}+1)\tilde\delta_{t,k}^{\mathrm{p}} \tilde\delta_{t,k'}- u_{t,k} f(1)\tilde\delta_{t,k} \tilde\delta_{t,k'}-r_{t,k} f(x_{t,k}+1)\tilde\delta_{t,k} \tilde\delta_{t,k'}+ u_{t,k'}f(1)\tilde\delta_{t,k'}^{\mathrm{f}}\tilde\delta_{t,k}\\
   &\hspace{1mm}+ r_{t,k'}f(x_{t,k'}+1)\tilde\delta_{t,k'}^{\mathrm{p}} \tilde\delta_{t,k} -u_{t,k'} f(1)\tilde\delta_{t,k'} \tilde\delta_{t,k} -r_{t,k'} f(x_{t,k'}+1)\tilde\delta_{t,k'} \tilde\delta_{t,k} \\
   &\hspace{1mm}+\tilde\delta_{t,k'} \tilde\delta_{t,k}  +2KQ_{t} \sum_{k\in \mathcal{K}}u_{t,k} f(1)\tilde\delta_{t,k}^{\mathrm{f}}+ r_{t,k}f(x_{t,k}+1)\tilde\delta_{t,k}^{\mathrm{p}} \\
    &\hspace{1mm}+\big[ 1-u_{t,k} f(1) - r_{t,k} f(x_{t,k}+1)    \big]\tilde\delta_{t,k} \Big ] +\frac{1}{2} \big[({\Delta}^{\mathrm{max}})^2 - 2Q_{t} {\Delta}^{\mathrm{max}}\big ].
\end{array}
\end{equation*}
%The details of deriving \eqref{eqQ5n} is presented in Appendix~\ref{pfp}.
Accordingly, the solution of \eqref{lyp1} can be derived by solving the following problem 
\begin{spacing}{1.3}\vspace{-5mm}
\begin{subequations}
\label{fdp1}
    \begin{alignat}{2}
    \amin
    \quad & W_t\\
\mbox{subject to}\quad& \eqref{lyp1:2}-\eqref{lyp1:4},%x_{t+1,k}\leq x^{\mathrm{max}},~ k\in \mathcal{K}\\
%&\textstyle\sum_{k=1}^{K}u_{t,k} + r_{t,k}\leq1\label{fdp1:3}\\
%&u_{t,k},r_{t,k}\in \{0,1\},~ k\in \mathcal{K},
    \end{alignat}
\end{subequations}
\end{spacing}
\noindent with variables ${\{u_{t,k},r_{t,k}\}_{k\in\mathcal{K}}}$.

The proposed LC-DT policy is summarized in Algorithm \ref{DCalg}. In Step 1, the transmitter updates the age of random arrival sources' fresh packets at slot $t$, $\delta_{t,k}^{\mathrm{f}}$, using \eqref{ageb}. In Step 2, given the current network state, it solves problem \eqref{fdp1} to find the optimal transmission decision at slot $t$. In Step 3, the network state (except $\delta_{t,k}^{\mathrm{f}}$ for all sources) is updated based on the current network state and the obtained transmission decision. %$\delta_{t+1,k}^{\mathrm{p}}$ using \eqref{ager}, $\delta_{t+1,k}^{\mathrm{}}$ using \eqref{age1}, and $Q_{t+1}$ using \eqref{vq1}. By the updated network state it goes to the first step.

Regarding finding the solution to problem \eqref{fdp1}, we observe that the number of feasible transmission decisions at each slot is a moderate number. Namely, the transmission options over the $K$ sources are sending a fresh packet (i.e., $K$ options), sending an under-process packet (i.e., $K$ options), or staying idle (i.e., $1$ option). On the other hand, if the under-process packet of a source was sent $x^{\mathrm{max}}$ times, the transmitter cannot send that packet. Thus, the number of feasible actions at each slot, i.e., the number of feasible combinations of variables $\{u_{t,k},r_{t,k}\}_{k\in \mathcal{K}}$ in problem \eqref{fdp1}, is at most $2K+1$. Because $2K+1$ is a small value for a reasonable system (increasing only linearly in $K$), we use the exhaustive search algorithm to solve problem \eqref{fdp1}.

%To solve problem \eqref{fdp1}, we use the exhaustive search algorithm. In fact, the transmitter can have a transmission attempt only for one source (transmitting the fresh packet or the under-process packet). Also, if the under-process packet of a source was sent $x^{\mathrm{max}}$ times, the transmitter cannot send the source's under-process packet. It means the number of transmission options for the transmitter is at maximum $2K$. Considering that the transmitter can stay idle, the number of feasible transmission scheduling policies for problem \eqref{fdp1} is at maximum $2K+1$, which is a small value for a reasonable system. Therefore, the exhaustive search algorithm is used to solve problem \eqref{fdp1}.

\begin{algorithm}[t]
    \SetKwInOut{Input}{Initialize}
    \SetKwComment{Comment}{/* }{ */}
    \Input{Set $V$, and initialize $o_1$.} 

    \For {$t =1,2,3,\dots$}{{Step 1: Update $\delta_{t,k}^{\mathrm{f}}$ using \eqref{ageb}}\\
        {Step 2: Find decision variables ${\{u_{t,k},r_{t,k}}\}_{k\in\mathcal{K}}$ by solving problem \eqref{fdp1}}\\
        {Step 3: Update $x_{t+1,k}$ using \eqref{trr}, $\delta_{t+1,k}^{\mathrm{p}}$ using \eqref{ager}, $\delta_{t+1,k}^{\mathrm{}}$ using \eqref{age1}, and $Q_{t+1}$ using \eqref{vq1}}
      }    
    \caption{Proposed low-complexity dynamic transmission (LC-DT) policy}
    \label{DCalg}
\end{algorithm}
%%%%%%%%%%%%%%%%%%%%%%%%%%%%%%%%%%%

%%%%%%%%%%%%%%%%%%%%%%%%%%%%%%%%%%%%
\subsection{Learning-based Transmission Policy: Unknown Environment}\label{unknown}
In this subsection, we assume that the environment is unknown, i.e., the transmitter does not know the system statistics, namely: 1) the packet arrival probability of each random arrival source, $\lambda_k$, and 2) the probability of successful decoding function, $f(\cdot)$. In this scenario, we transform CMDP problem \eqref{pcmdp01} into an unconstrained MDP problem via the Lyapunov DPP method. Then, we utilize the DQL algorithm \cite{dqn} to solve the MDP problem. Even though the DQL algorithm cannot ensure the optimality of the solution, the algorithm can be applied 1) without knowing the system statistics, 2) in a system with a large state and action space, and 3) without upper bounding AoI.

Inspired by \cite{wu2020} and using the results of Section \ref{known}, we transform CMDP problem \eqref{pcmdp01} into an MDP problem, in which our goal is to minimize the time average {DPP} function
\begin{equation}\label{pcmdp3}
\begin{array}{ll}
\underset{\pi}{\amin} \quad & %\limsup_{T \rightarrow \infty}
%\frac{1}{\theta_m}
\underset{T\rightarrow \infty}{\limisup} \frac{1}{T} \sum_{t=1}^{T}\mathbb{E}\left\{V\tau_t+ L(Q_{t+1})-L(Q_{t})\mid o_t\right\}.
\end{array}
\end{equation}

The system state of the MDP at slot $t$ is ${o_t = \big\{\{\delta_{t,k}^{\mathrm{f}}, \delta_{t,k}^{\mathrm{p}},\delta_{t,k},x_{t,k}\}_{k\in\mathcal{K}},Q_{t}\big\}}$ (defined in Section \ref{known}), and the action at time slot $t$ is $a_t=\{u_{t,k},r_{t,k}\}_{k\in\mathcal{K}}\in\mathcal{A}_{s_t}$ (defined in Section \ref{cmdpfm}). The cost function $\tilde{c}_t$, which is defined as the DPP function, is given as ${
\tilde c_t =V\tau_t +L(Q_{t+1}) -L(Q_{t})}$.
Note that the state transition probabilities of the MDP problem \eqref{pcmdp3} are unknown, as the environment is unknown.

To solve the MDP problem \eqref{pcmdp3}, we use the DQL algorithm in \cite[Algorithm 1]{dqn}. According to the DQL algorithm, an action is selected at each state to maximize a cumulative discounted immediate reward. As we aim to minimize cost function $\tilde c_t$, the immediate reward is defined as $r_t = -\tilde c_t$. %Note that the main requirement to use this algorithm is to calculate a transitions under interaction with environment, which a transition at slot $t$ is defined as $(o_t,a_t,R_t,o_{t+1})$. 
The implementation of DQL, along with the parameters, is presented in Section \ref{nrds}.

%%%%%%%%%%%%%%%%%%%%%%%%%%%%%%%%%%%%%%%%%%%%%%%%%%%%%%%%%%%%%%%%%%%%%%%%%%%%%%%%%%%%%
\section{Numerical Results}\label{nrds}
In this section, we evaluate the performance of the proposed transmission scheduling policies, namely: 1) the  deterministic transmission policy presented in Algorithm \ref{Acmdp} in Section \ref{stationary}, 2) the low-complexity dynamic transmission (LC-DT) policy presented in Algorithm \ref{DCalg} in Section \ref{known}, and 3) the learning-based transmission policy presented in Section \ref{unknown}.

For the probability of successful decoding, we use the function in \cite{gz}, i.e.,  ${f(x_{t,k}) = {1-p_0\eta ^{x_{t,k}-1}}}$, where $p_0\in[0,1]$ is the error probability of the first transmission of a packet and $\eta\in[0,1]$ determines the effectiveness of the HARQ protocol. Unless otherwise specified, we consider one random arrival source and one generate-at-will source, i.e., $K=2$, and we set $\delta^{\mathrm{max}} = 18$ and $x^{\mathrm{max}} = 5$. The rest of the parameters are specified in each figure.

\subsection{Deterministic Transmission Policy}
%\textcolor{green}{1-I still think this figure could be improved. Plotting with respect to time dos not make much sense.}
%\textcolor{green}{2-In all the figures, legends should be in the order of curves.}
%\textcolor{green}{3-The reasons of behaviors need to be explained}
%In this subsection, we study the deterministic transmission policy by evaluating the average number of transmissions, $\bar\tau$, as shown in Fig.~\ref{fff1}.
In Fig.~\ref{fff1}, we evaluate the deterministic transmission policy {and the impact of system parameters on the performance}. For Algorithm~\ref{Acmdp}, we set the bounds on the Lagrangian multiplier as ${\beta_{\mathrm{u}} = 1}$, ${\beta_{\mathrm{l}} = 0}$, the bisection stopping criterion as ${\kappa = 0.005}$, and the RVIA stopping criterion as ${\epsilon = 0.01}$.

Fig.~\ref{fff1}(a) illustrates the evolution of average number of transmissions, $\bar\tau$, with respect to time slots for different values of maximum allowable average AoI, $\Delta^{\mathrm{max}}$, and the error probability of the first transmission, $p_0$. It can be seen from Fig.~\ref{fff1}(a) that $\bar\tau$ increases when $p_0$ increases. This behavior is due to the fact that when $p_0$ increases, the probability of successful decoding decreases, and thus, more transmission attempts are needed to meet the average AoI constraint. For example, when $\Delta^{\mathrm{max}} = 4$, by increasing $p_0$ from $0.4$ to $0.6$, $\bar\tau$ increases by about $50$ $\%$. In addition, $\bar\tau$ decreases when $\Delta^{\mathrm{max}}$ increases, because the transmitter needs fewer transmission attempts to satisfy the AoI constraint.

Fig.~\ref{fff1}(b) shows the evolution of $\bar\tau$ with respect to time slots for the different packet arrival rate, $\lambda_k$, and $\Delta^{\mathrm{max}}$.
From Fig.~\ref{fff1}(b), it can be seen that when $\lambda_k$ decreases, the average number of transmissions increases  dramatically. For example, when $\Delta^{\mathrm{max}} = 4$, by decreasing $\lambda_k$  from $0.5$ to $0.2$, the value of  $\bar\tau$ increases by about $75$ $\%$. This is because when $\lambda_k$ decreases, the availability of the fresh packets at the random arrival source decreases, and consequently, the AoI of this source increases. In this case, to satisfy the AoI constraint, the transmitter
%, in addition to keep updating the random arrival source, 
must send the generate-at-will source's packets more frequently to compensate for the negative effect of the random arrival sources on the average AoI.

%In the figure, there is a decrease in $\bar\tau$ when $p_0$ increases, particularly in small $\Delta^{\mathrm{max}}$. For example, if $\Delta^{\mathrm{max}} = 4$ and $p_0$ is increased from $0.4$ to $0.6$, $\bar\tau$ increases by about $50\%$. 

\begin{figure}[t]
\centering
\begin{subfigure}{.49\textwidth}
    \includegraphics[width = 8.1cm]{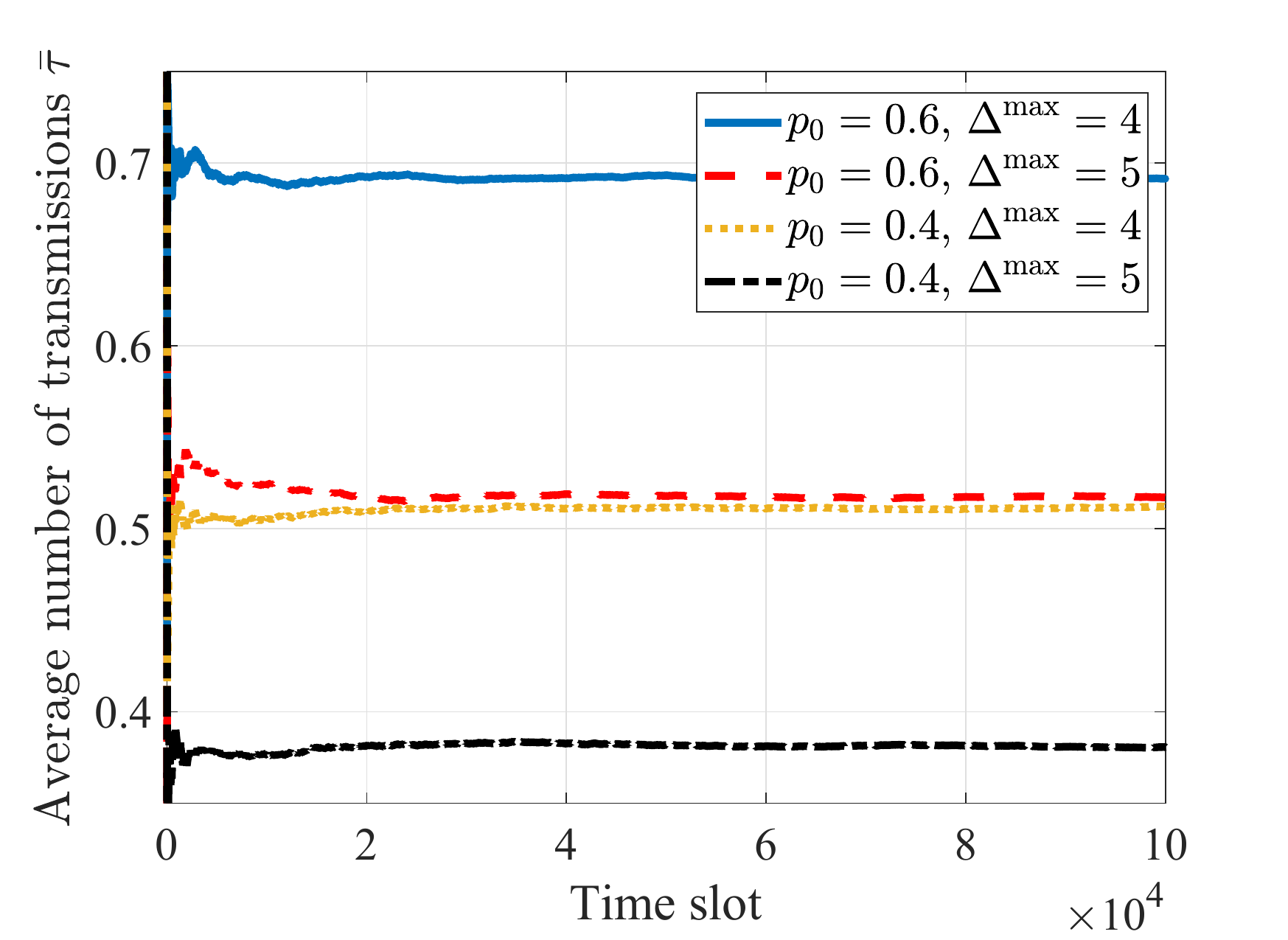}
    \caption{The evolution of $\bar\tau$ versus time slots for different $p_0$ with $\lambda_k = 0.7$ for all $k\in\mathcal{I}$.}
    \label{fanotcmdp}
    \end{subfigure}
\begin{subfigure}{.49\textwidth}
    \includegraphics[width = 8.1cm]{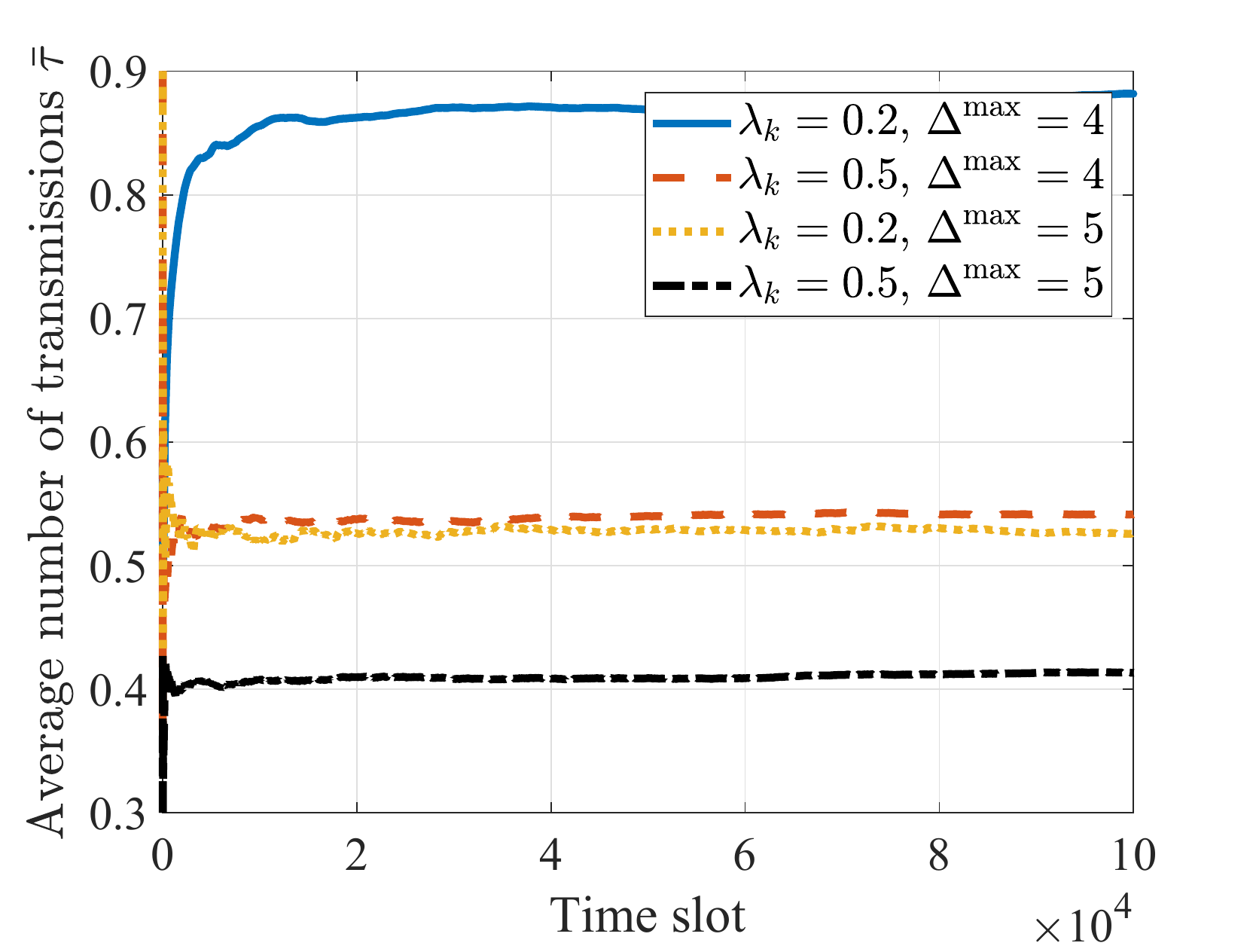}
    \caption{The evolution of $\bar\tau$ versus time slots for different $\lambda_k$ for $k\in\mathcal{I}$ with ${p_0 = 0.4}$.}
    \label{fanotcmdp2}
        \end{subfigure}\vspace{-2mm}
  \caption{The performance of the deterministic transmission policy for different $\Delta^{\mathrm{max}}$ with $\eta = 0.4$.}
  \label{fff1}\vspace{-10mm}
\end{figure}

 %We observe that the reduction in $\lambda_k$ has a significant effect on $\bar\tau$ negatively. For instance, with $\Delta^{\mathrm{max}} = 5$ there is about $75\%$ increase in $\bar\tau$ when $\lambda_k$ is reduced from $0.7$ to $0.4$. The negative effect of reduction in $\lambda_k$ can be worsened in small $\Delta^{\mathrm{max}}$.

\subsection{LC-DT Policy}
In this subsection, we evaluate the performance of the LC-DT policy.
Fig.~\ref{f3}(a) depicts the evolution of the average number of transmissions, $\bar\tau$, with respect to time slot for different values of the DPP trade-off parameter, $V$. Fig.~\ref{f3}(b) depicts the average AoI with respect to time slot for different values of $V$.

According to the figure, when $V$ increases, the average number of transmissions decreases and the average AoI increases. This is because by increasing $V$, the algorithm puts more emphasis on minimizing the average number of transmissions. In addition, Fig.~\ref{f3}(b) shows that for any value of $V$, the algorithm satisfies the constraint. Furthermore, for {$V\geq20$}, the obtained average AoI is about the maximum allowable average AoI; also, increasing $V$ beyond this value leads to negligible improvement in the objective function (see Fig.~\ref{f3}(a)). Therefore, it is beneficial to set $V$ larger than $30$ when utilizing the LC-DT policy.%, even though the improvement in the objective function by increasing $V$ is negligible (see Fig.~\ref{f1}).
\begin{figure}[t]
\centering
\begin{subfigure}{.49\textwidth}

\includegraphics[width = 8.1cm]{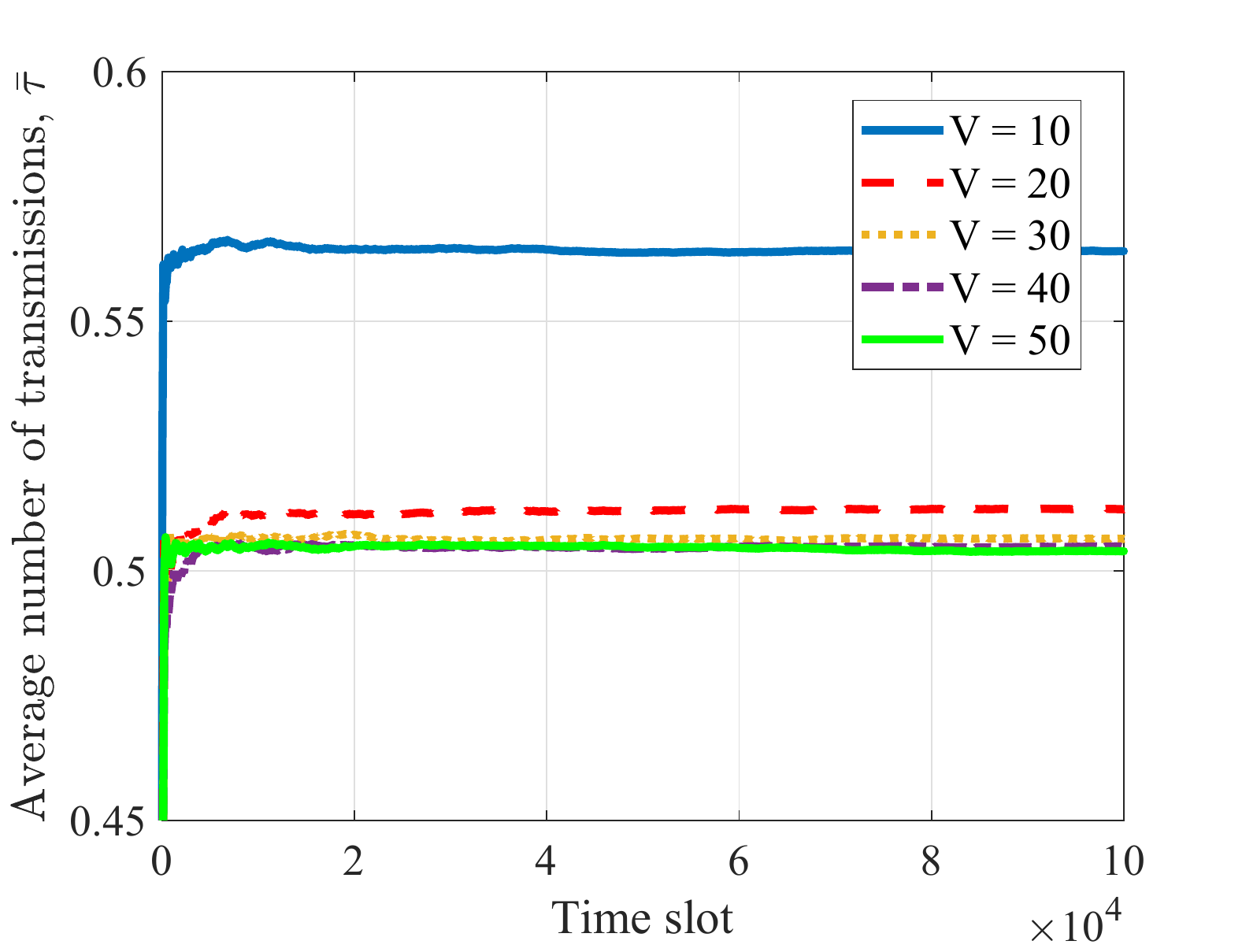}
\caption{The average number of transmissions.}
\label{f1}
\end{subfigure}
\begin{subfigure}{.49\textwidth}

\includegraphics[width = 8.3cm]{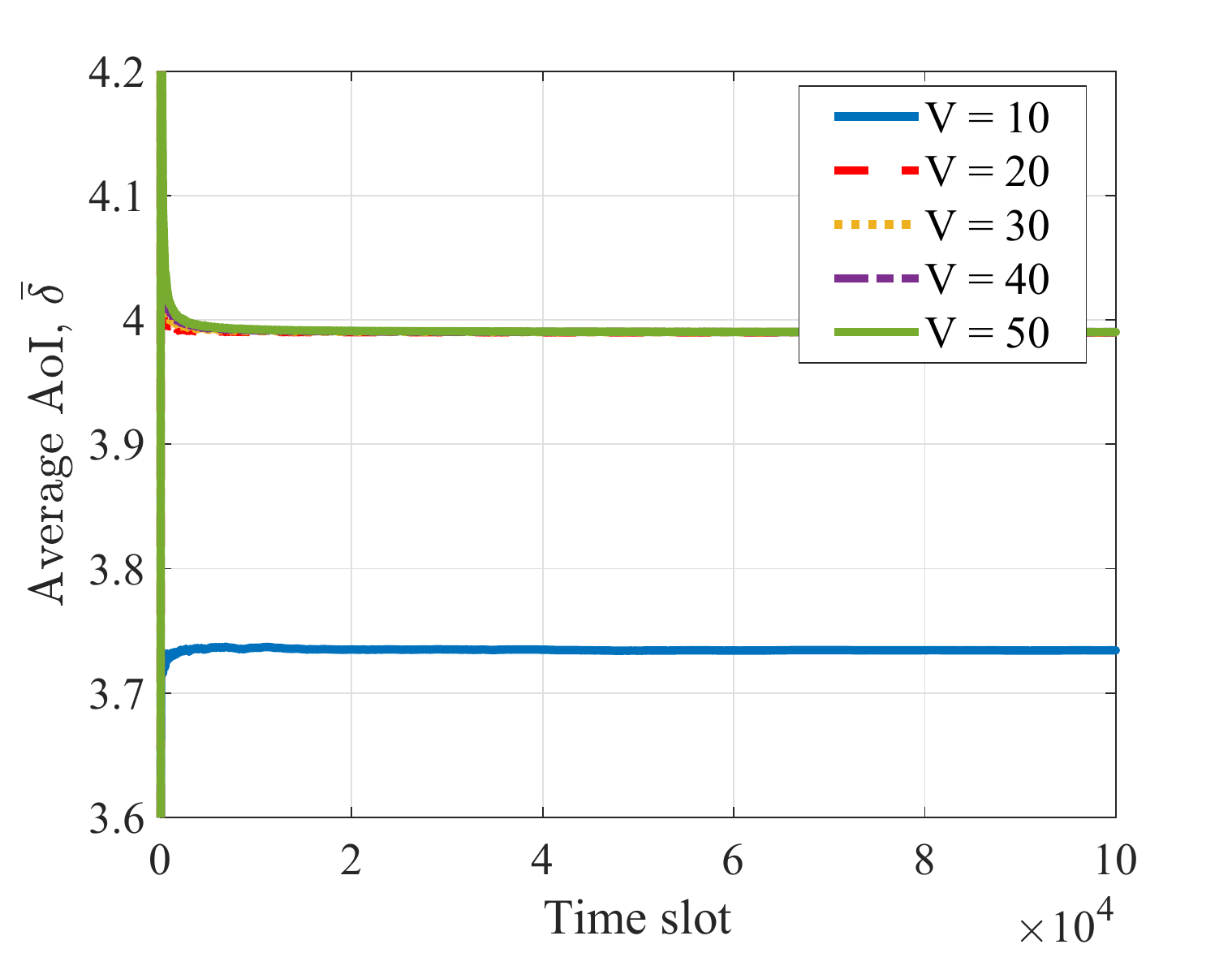}
\caption{The average AoI.}
\label{f2}
\end{subfigure}\vspace{-2mm}
\caption{The performance of the LC-DT policy versus time slots for different values of $V$, where $\eta = 0.4$, $p_0 = 0.4$, $\lambda_k = 0.7$ for all $k\in\mathcal{I}$, and $\Delta^{\mathrm{max}} = 4$.}
\label{f3}\vspace{-10mm}
\end{figure}
%%%%%%%%%%%%%%%%%%%%%%%%%%%%%%%%%%%%%%%%%%%%%%%%%%%%%%%%%%%%%%%%%%%%%%%%%%%%%%%%%%%%%%%%%%%%%%%%%%%%%%%%%%%%%%%%%%%%%%%%%%%%%%%%%%%%%%%%%%%%%%%%%%%%%%%%%%%%%%%%%%%%%%%%%%%

\subsection{Learning-based Transmission Policy}
In this section, we examine the learning-based transmission policy by evaluating the average number of transmissions, $\bar\tau$. To implement DQL algorithm, we use a fully-connected deep neural network with two hidden layers. Each hidden layer has $256$ neurons with \textit{ReLU} activation function. The optimizer is \textit{Adam}, the mini-batch size is $32$, and the replay memory size is $100000$. We set the learning rate as $0.001$, the discount factor as $0.99$, the number of steps per episode as $1000$, and the target network update rate as $\frac{1}{500}$. %Fig.~\ref{flearning} demonstrates the results of the learning-based transmission policy.
The convergence of the policy for different values of $V$ are shown in Fig.~\ref{flearning}(a). It can be seen that by taking about $800$ episodes, the DQL-agent (transmitter) is learned. After this, we stop the learning process and use the learned transmitter to operate in the system. Fig.~\ref{flearning}(b) shows the performance of the algorithm by evaluating $\bar\tau$ with respect to time slots for different values of $V$. Similar to Fig.~\ref{f3}, it can be seen that the performance is not considerably improved for $V$ larger than $30$.
%\begin{table}[h!]
%\centering
%\caption{DQL algorithm Parameters and their Values}
%\label{tdqn} 
%    \begin{tabular}{|l|c|}
%        \hline
%        Parameter & Value \\
%        \hline
%         Learning rate & 0.001\\
%        \hline
%        Hidden layers & (256,256)\\
%        \hline
%        Hidden activation& ReLU\\
%        \hline
%        Optimizer & Adam\\
%        \hline
%        Minibatch size& 32\\
%        \hline
%        Steps per episode &1000\\
%        \hline
%        Episode &800\\
%        \hline
%        Replay memory size&100000\\
%        \hline  
%        Discount factor & 0.99\\
%        \hline
%        Target network update & 500 steps\\
%        \hline        
%    \end{tabular}
    %\label{T1}
%\end{table}
\begin{figure}[t]
\centering
\begin{subfigure}{.49\textwidth}
    \centering
    \includegraphics[width = 8.2cm]{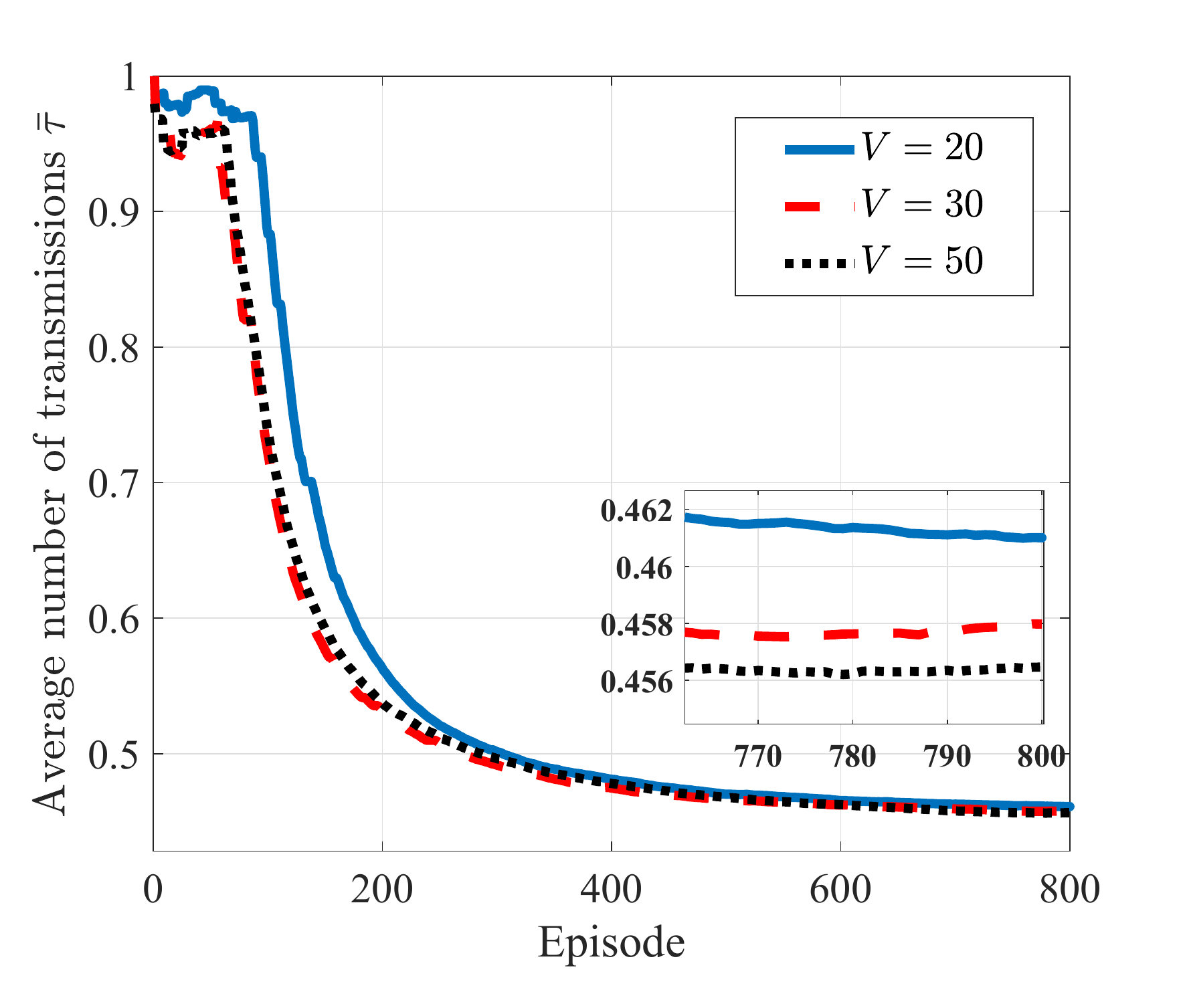}
    \caption{The evolution of $\bar\tau$ during the learning process.}
    \label{fepisode}
\end{subfigure}
\begin{subfigure}{.49\textwidth}
    \centering
    \includegraphics[width = 8.2cm]{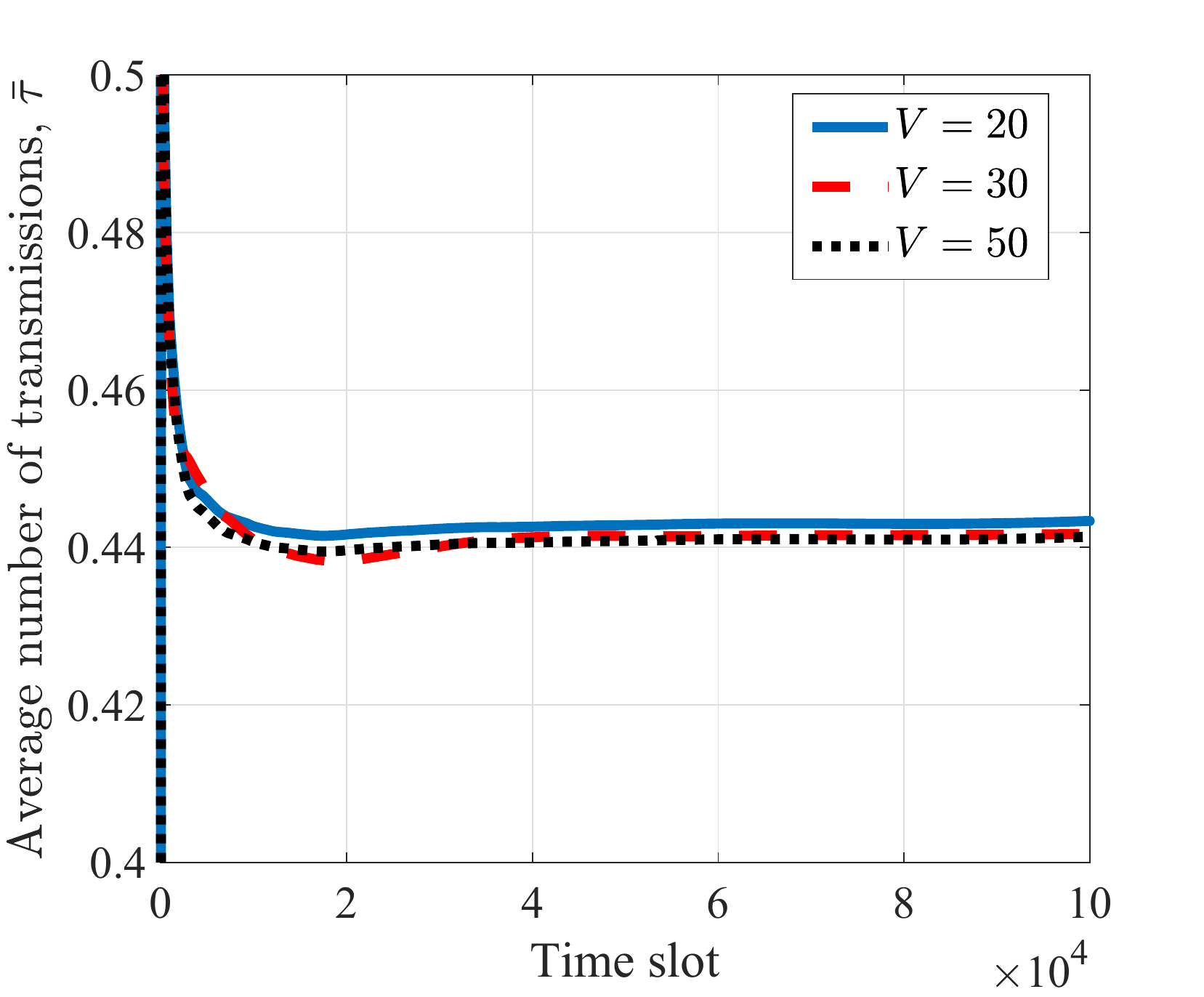}
    \caption{The evolution of $\bar\tau$ for the learned transmitter.}
    \label{flearningslot}
\end{subfigure}\vspace{-2mm}
\caption{The performance of the learning-based transmission policy for different values of $V$, where $\eta = 0.4$, $p_0 = 0.3$, $\lambda_k = 0.7$ for all $k\in\mathcal{I}$, and $\Delta^{\mathrm{max}} = 4$.}
\label{flearning}\vspace{-8mm}
\end{figure}
%\begin{figure}[t][H]
%    \centering
%    \includegraphics[width = 10cm]{Average_AoI_simplesetting_02_11021_C.eps}
%    \caption{The average AoI for the dynamic transmission policy and the learning-based transmission policy versus the number of episodes $p_0 = 0.2$, $\eta = 0.2$, $\Delta^{\mathrm{max}} = 3$, $p_k^{\mathrm{}}= 0.7$, and the setting of Table \ref{T1}.}
%    \label{f8}
%\end{figure}

%%%%%%%%%%%%%%%%%%%%%%%%
\subsection{Comparison of the  Proposed Policies}
Fig.~\ref{fcomp} shows the average number of transmissions, $\bar\tau$, as a function of  $\Delta^{\mathrm{max}}$ under different policies. In this figure, we plot the (infeasible) lower-bound policy, obtained in Algorithm \ref{Acmdp}, as a benchmark.
%
%
%
%compare the proposed transmission scheduling policies by evaluating the average number of transmissions, $\bar\tau$. 
%
%In addition to the proposed transmission policies, 
Furthermore, we consider a (feasible) baseline policy, where the transmitter sends a packet whenever the average AoI reaches $\Delta^{\text{max}}$. In every transmission attempt, the source with larger AoI is selected; if there are multiple sources with the largest AoI, one of them is selected randomly. The policy employs an HARQ protocol where the transmitter persistently re-transmits the packet at consecutive slots until it is transmitted successfully or reaches the maximum allowed number of transmissions $x^{\text{max}}$.
%The figure shows that when $\Delta^{\mathrm{max}}$ increases, $\bar\tau$ decreases as we saw in previous figures. 
According to Fig.~\ref{fcomp}, as expected,
the lower-bound policy outperforms the other proposed policies as it does not satisfy the constraint. 
Both the LC-DT policy and the deterministic transmission policy have a small gap with the lower-bound policy, which implies their near-optimal performance. 
%For $\Delta^{\mathrm{max}} = 3$, representing a tight budget for the system, the learning-based transmission policy has its worst performance compared to the other policies, however, it performs close to the proposed transmission scheduling policies for the known environment.
The learning-based transmission policy, which is obtained without knowing the system statistics, performs {relatively} close to the policies for the known environment, {especially for $\Delta^{\text{max}}\geq 4$.}
%\red{Moreover, the policy has a near-optimal performance for $\Delta^{\text{max}}\geq 4$.}
In general, compared to the baseline policy, the proposed policies improve the system performance considerably, e.g., the LC-DT policy provides about $40$ $\%$ improvement.

\begin{figure}[t]
    \centering
    \includegraphics[width = 9.0cm]{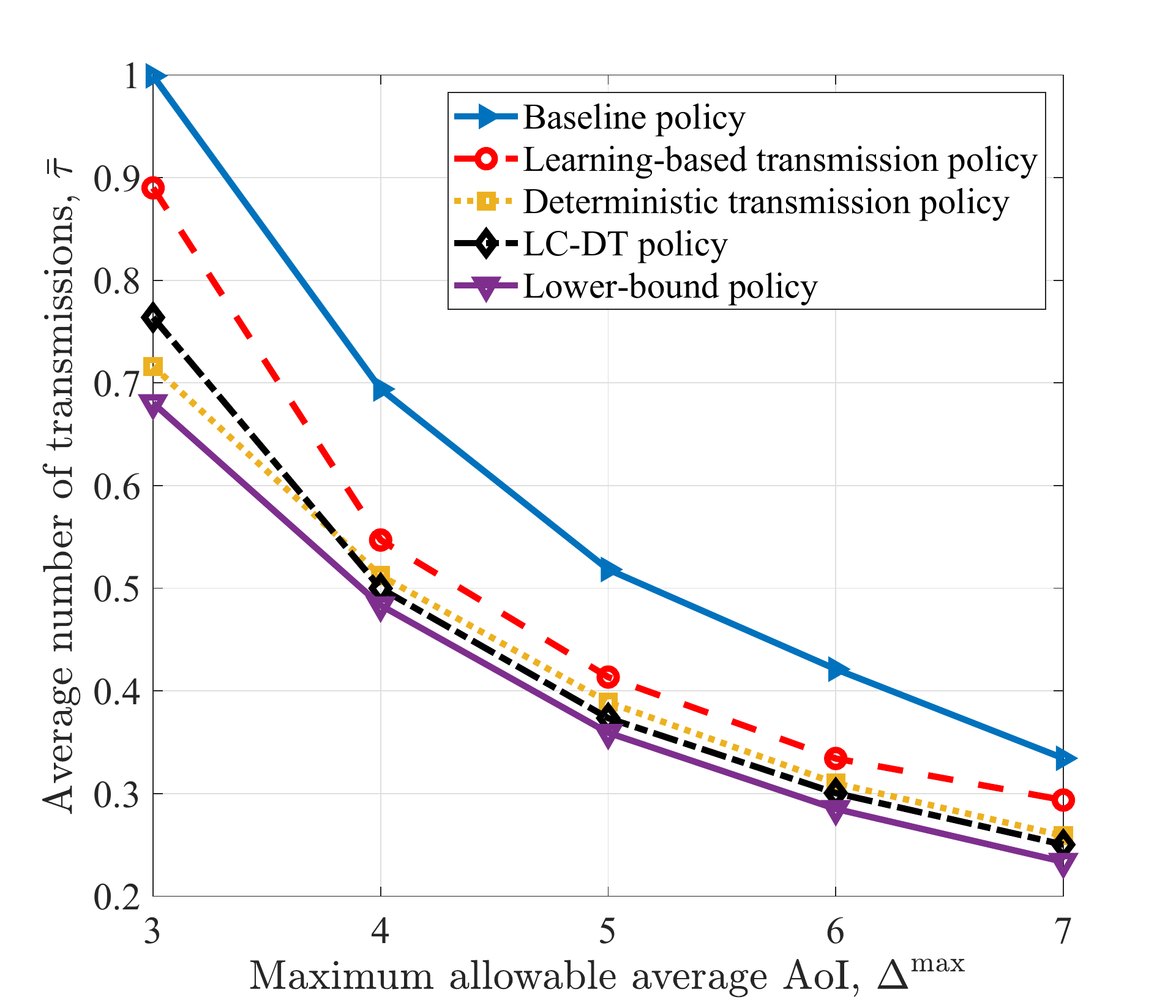}\vspace{-2mm}
    \caption{The average number of transmissions, $\bar\tau$, for the proposed transmission policies versus $\Delta^{\mathrm{max}}$ where $p_0 = 0.4$, $\eta = 0.4$, and $\lambda_k = 0.7$ for all $k\in\mathcal{I}$.}
    \label{fcomp}\vspace{-4mm}
\end{figure}

In Table~\ref{TC}, we compare the time complexity of the proposed transmission policies 
%with the same processing resource. 
by evaluating the consumed time 1) in offline phase, i.e., initial processing time to find a policy and 2) in online phase, i.e., running time to find the optimal action at each slot. Regarding the offline phase, both the deterministic transmission policy and the learning-based transmission policy need to explore a large number of states, thus, they have a large initial processing time. Since the deterministic transmission policy needs to search for the optimal Lagrangian multiplier (bisection) and explore all the states (RVIA), its initial processing time is the longest. On the contrary, the LC-DT policy does not involve any initial processing (besides simply setting up problem \eqref{fdp1}).
%, thus its initial processing time is zero.
%The reason is the dynamic transmission policy examines only one network state over all actions at each slot, while other transmission scheduling policies examine a large number of states. % However, the dynamic policy cannot provide a reliable result before the virtual queue converges.
In the online phase, the running time of the deterministic transmission policy is the smallest as the action selection is done through the lookup table. The running time of the LC-DT policy is larger, because it needs to solve optimization problem \eqref{fdp1}. The learning-based transmission policy has the largest running time, where a forward pass through the neural network is executed to select an action.

\begin{table}[t]
\caption{Time complexity of the proposed transmission scheduling policies}
    \centering
    \begin{tabular}{|l|l|l|}
    \hline
         Policy&Initial processing time (s)& Running time (ms) \\
         \hline
         Deterministic transmission policy&105013.745 & 0.031\\
         \hline
          LC-DT policy&0 & 0.342\\
         \hline
         Learning-based transmission policy&4029.38& 0.934\\
         \hline
    \end{tabular}
    \label{TC}
\end{table}

%%%%%%%%%%%%%%%%%%%%%%%%%%%%%%%%%%%%%%%%%%%%%%%%%%%%%%%%%%%%%%%%%%%%%%%%%%%%
\subsection{Impact of HARQ}
In Fig.~\ref{harqdpp}, we study the impact of HARQ on the system's performance by evaluating the average number of transmissions, $\bar\tau$. Here, without loss of generality, we utilize the LC-DT policy.
Fig.~\ref{harqdpp}(a) shows the effect of the maximum allowed number of transmissions for a packet, $x^{\mathrm{max}}$, where $x^{\mathrm{max}}=1$ indicates that the system operates without HARQ. As it can be seen, for high probability of successful decoding (small $p_0$), i.e., for good channel conditions, HARQ is not that beneficial. This is due to the fact that with high probability of successful decoding, most of the packets are successfully decoded in the first transmission attempt. However, in bad channel conditions, HARQ plays an important role. %For example, when $p_0 = 0.4$, the transmitter does not need to retransmit a packet more than once, even though the performance is not significantly improved. 
For example, when $p_0 = 0.6$, the performance improves substantially by increasing $x^{\mathrm{max}}$ from $1$ to $2$, i.e., by merely activating the HARQ with only one allowed retransmission. It is worth noting that when $p_0 = 0.7$, the transmitter cannot satisfy the average AoI constraint without HARQ ($x^{\mathrm{max}}=1$). This is because when $p_0$ is large, the first transmit attempts tend to fail, in which case the retransmissions would be the crucial enabler for successful receptions of the packets. 
%Therefore, utilizing the retransmission increases the probability of successful decoding and thus, the transmitter can reduce $\bar\tau$ and satisfies the AoI constraint.

\begin{figure}[t]
\centering
\begin{subfigure}{.49\textwidth}
    \centering
    \includegraphics[width = 8.2cm]{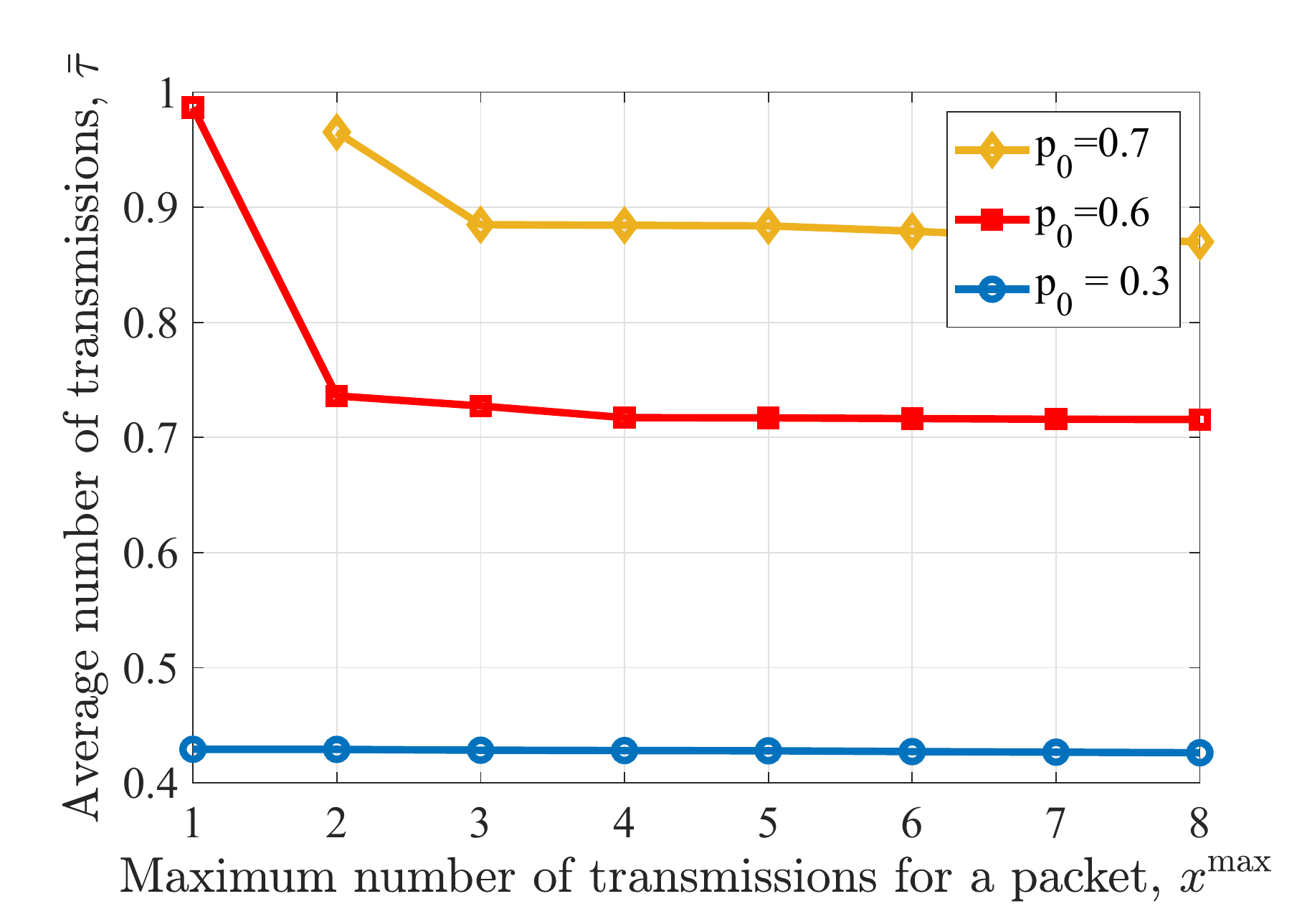}
    \caption{The average number of transmissions, $\bar\tau$, versus $x^{\mathrm{max}}$, where ${\eta = 0.3}$ and $\Delta^{\mathrm{max}} = 4$.}
    \label{f51}
    \end{subfigure}
\begin{subfigure}{.49\textwidth}
    \includegraphics[width = 8.2cm]{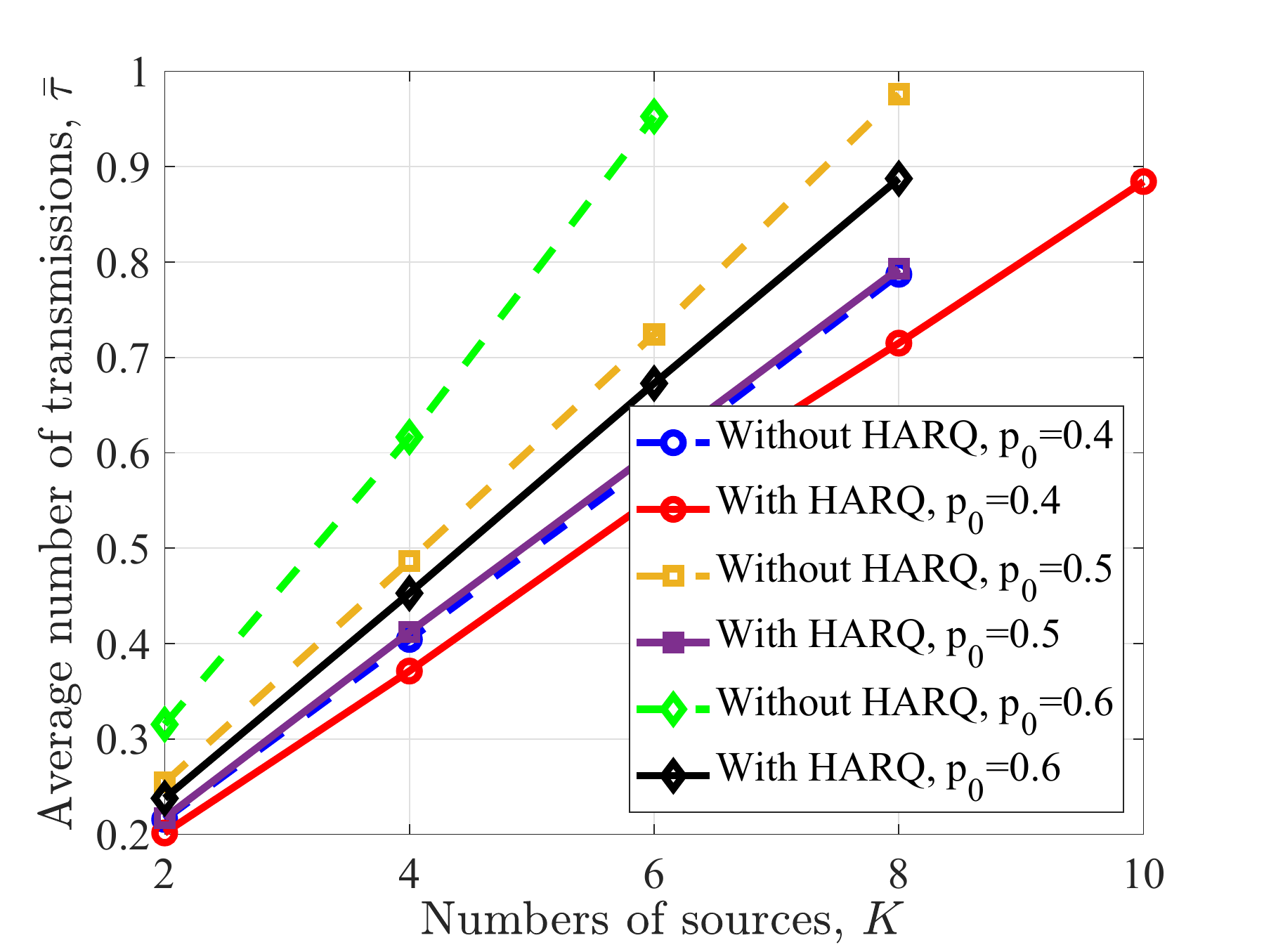}
    \caption{The average number of transmissions, $\bar\tau$, versus $K$, where ${\eta = 0.4}$, $x^{\mathrm{max}}=5$, and $\Delta^{\mathrm{max}} = 10$.}
    \label{f5}
    \end{subfigure}\vspace{-2mm}
    \caption{The impact of employing HARQ on the performance of the system for different $p_0$.}
    \label{harqdpp}\vspace{-8mm}
\end{figure}

In Fig.~\ref{harqdpp}(b), we study the effect of employing HARQ by evaluating $\bar\tau$ with respect to the number of sources $K$, where $K = I+J$ and $I=J$. According to the figure, $\bar\tau$ increases by increasing $K$. 
This is due to the fact that because the transmitter has to send each source's packets regularly to satisfy the AoI constraint, the increased number of sources inevitably leads to more transmissions in the system. Moreover, similar to Fig.~\ref{harqdpp}(a), it can be seen in Fig.~\ref{harqdpp}(b) that employing HARQ decreases $\bar\tau$, as HARQ benefits from the increased probability of successful decoding via retransmissions. For example, with $p_0=0.6$ and $K=6$, $\bar\tau$ with HARQ is $30$ $\%$ less than in the case without HARQ. Interestingly, the system cannot support $K=10$ sources for $p_0\geq 0.4$, unless HARQ is employed.
%} 
%for the same $\Delta^{\mathrm{max}}$, when the number of sources increases, the transmitter has more sources to be update we need more transmission attempts to satisfy the AoI constraint \textcolor{blue}{I can not get this and the rest}.
%\textcolor{red}{With employing HARQ, the transmitter requires fewer transmission attempts compared to the absence of HARQ. For instance, with $p_0=0.6$ and $K=6$, $\bar\tau$ with HARQ is $30$ $\%$ less than the case without HARQ. Interestingly, the system cannot support $K=10$ sources with $p_0=0.4$ without HARQ. This is due to the fact that by having HARQ we can use retransmission which provides larger probability of successful decoding and thus, we need fewer transmission attempts.} 

%\begin{figure}[t][H]
%    \centering
%    \includegraphics[width = 9cm]{Action_distrobution_Delmax4_pharq4_K2.eps}
%    \caption{The distribution of action selection for different value of the Bernoulli random process parameter and the first transmission error probability with the setting of Table \ref{T1}.}
%    \label{f6}
%\end{figure}
%%%%%%%%%%%%%%%%%%%%%%%%%%%%%%%%%%%%%%%%%%%%%%%%%%%%%%%%%%%%%%%%%%%%%%%%%%%%
\section{Conclusion}\label{clcn}
We studied an HARQ-based multi-source status update system with random arrival and generate-at-will sources, communicating through an error-prone channel. We solved the problem of minimizing the average number of transmissions subject to the average AoI constraint in the known and unknown environments. We developed a deterministic transmission policy using the RVIA and the bisection for the known environment. For the sake of reduced computational complexity, we developed the LC-DT policy using the DPP method for the known environment. For the unknown environment, we utilized the DPP method and the DQL algorithm to develop a learning-based transmission policy. The numerical results showed the near-optimal performance of the deterministic transmission policy and the LC-DT policy. Also, the learning-based policy attained performance relatively close to the policies developed for the known environment. 
%The results showed about $40$ $\%$ performance gain for the proposed policies over a baseline scheduling policy. 
%Moreover, besides studying the effect of different parameters, we showed that using HARQ improves the system performance.
Overall, the results showed about $40$ $\%$ performance gain for the proposed policies over a baseline scheduling policy and demonstrated the great potential of HARQ to improve information freshness in multi-source status update systems.

\appendices

\section{Proof of Lemma \ref{lm1}}\label{plm1}
To derive $\mathbb{E}\{\hat\delta_{t+1} \mid o_t\}$, we use the definition in \eqref{avokaoi}, $\hat\delta_{t} = {\frac{1}{K}\sum_{k=1}^{K} \delta_{t,k}}$,
and re-express $\delta_{t+1,k}$ via \eqref{age1} as
\begin{equation}\label{ages1}
\begin{array}{ll}
    \delta_{t+1,k}& \hspace{-3mm}= u_{t,k}d_{t}\tilde\delta_{t,k}^{\mathrm{f}}+ r_{t,k}d_{t}\tilde\delta_{t,k}^{\mathrm{p}} + \big[ u_{t,k}(1-d_{t})+ r_{t,k}(1-d_{t}) + (1-u_{t,k}-r_{t,k})  \big]\tilde\delta_{t,k}\\
    & \hspace{-3mm}= u_{t,k}d_{t}\tilde\delta_{t,k}^{\mathrm{f}}+ r_{t,k}d_{t}\tilde\delta_{t,k}^{\mathrm{p}} + \big[ u_{t,k}-u_{t,k}d_{t}+ r_{t,k}-r_{t,k}d_{t} + 1-u_{t,k}-r_{t,k}  \big]\tilde\delta_{t,k}\\
    & \hspace{-3mm}= u_{t,k}d_{t}\tilde\delta_{t,k}^{\mathrm{f}}+ r_{t,k}d_{t}\tilde\delta_{t,k}^{\mathrm{p}}+ \big[ 1- d_{t}(u_{t,k}+r_{t,k}) \big]\tilde\delta_{t,k}.
\end{array}
\end{equation}
Taking conditional expectation in \eqref{ages1}, $\mathbb{E}\big\{\delta_{t + 1,k}\mid o_t\big\}$ is expressed as
\begin{equation}\label{pureeaoi}
\begin{array}{ll}
   \!\!\!\!\mathbb{E}\{\delta_{t+1,k} \mid o_t\} & \hspace{-3mm}= \mathbb{E}\{u_{t,k}d_{t}\tilde\delta_{t,k}^{\mathrm{f}}\mid o_t\} + \mathbb{E}\{r_{t,k}d_{t}\tilde\delta_{t,k}^{\mathrm{p}}\mid o_t\}+ \mathbb{E}\Big\{\big[ 1- d_{t}(u_{t,k}+r_{t,k}) \big]\tilde\delta_{t,k}\mid o_t\Big\}\\
    &\hspace{-3mm}\overset{(a)}{=} \mathbb{E}\{u_{t,k}d_{t}\mid o_t\}\tilde\delta_{t,k}^{\mathrm{f}} + \mathbb{E}\{r_{t,k}d_{t}\mid o_t\}\tilde\delta_{t,k}^{\mathrm{p}}+\big[ 1- \mathbb{E}\{d_{t}(u_{t,k}+r_{t,k})\mid o_t\} \big]\tilde\delta_{t,k},
    \end{array}
\end{equation}
where equality $(a)$ follows from the fact that $\delta_{t,k}^{\mathrm{f}}$, $\delta_{t,k}^{\mathrm{p}}$, and $\delta_{t,k}$ are given by the network state.

We need to calculate $\mathbb{E}\{u_{t,k}d_{t}\mid o_t\}$ and $\mathbb{E}\{r_{t,k}d_{t}\mid o_t\}$ in \eqref{pureeaoi} which are given by the following two lemmas.

\begin{lemma}\label{l1}
For any source $k$, the conditional expectation $\mathbb{E}\{u_{t,k}d_{t}\mid o_t\}$ is given as%, any network state $o_t$, and any $t\in \mathbb{N}$, we have
\begin{equation}\label{eexpages2}
\begin{array}{ll}
&\mathbb{E}\{u_{t,k}d_{t}\mid o_t\} = \mathbb{E}\{u_{t,k}\mid o_t\}f(1).
\end{array}
\end{equation}
\end{lemma}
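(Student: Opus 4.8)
\textbf{Proof plan for Lemma~\ref{l1}.}
The plan is to condition on the value of the decision variable $u_{t,k}$, which is itself determined by (measurable with respect to) the network state $o_t$, and then exploit the structure of the HARQ decoding model. First I would observe that, given $o_t$, the quantity $u_{t,k}$ is a deterministic (or at least $o_t$-measurable) indicator taking values in $\{0,1\}$, so the product $u_{t,k}d_t$ is nonzero only on the event $\{u_{t,k}=1\}$. On that event the transmitter sends a \emph{fresh} packet from source $k$, which by the HARQ model constitutes the first transmission attempt of that packet, i.e. $x_{t,k}=1$ in the notation of \eqref{trr}. Hence on $\{u_{t,k}=1\}$ the reception status $d_t$ is a Bernoulli random variable with success probability $f(1)$, and this randomness (the channel) is independent of $o_t$ beyond what $o_t$ already specifies.

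The key steps, in order, are: (i) write $\mathbb{E}\{u_{t,k}d_t\mid o_t\} = \mathbb{E}\{u_{t,k}\,\mathbb{E}\{d_t\mid o_t,u_{t,k}\}\mid o_t\}$ by the tower property, pulling out $u_{t,k}$ since it is $o_t$-measurable; (ii) evaluate the inner conditional expectation: when $u_{t,k}=0$ the product vanishes so the value there is irrelevant, and when $u_{t,k}=1$ we have $\mathbb{E}\{d_t\mid o_t,u_{t,k}=1\}=f(1)$ because a fresh transmission is the first attempt and the channel success probability after one attempt is $f(1)$ (using also that at most one packet is sent per slot, so $d_t$ refers to source $k$'s packet); (iii) substitute to get $\mathbb{E}\{u_{t,k}d_t\mid o_t\}=\mathbb{E}\{u_{t,k}f(1)\mid o_t\}=f(1)\,\mathbb{E}\{u_{t,k}\mid o_t\}$, since $f(1)$ is a constant. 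This yields \eqref{eexpages2}.

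I do not expect a serious obstacle here; the lemma is essentially a bookkeeping statement about the HARQ model. The one point requiring a little care is justifying that conditioning on $u_{t,k}=1$ does not further bias $d_t$ — that is, the channel outcome $d_t$ depends on $o_t$ and the chosen action only through the number of prior transmission attempts of the packet being sent (here exactly one, since it is fresh), and not on the policy's internal randomization. This is exactly the modeling assumption built into the transmission model of Section~\ref{sys} (the function $f(\cdot)$ captures the full dependence of decoding success on attempt count). Once that is invoked, the computation is immediate, and the analogous Lemma for $\mathbb{E}\{r_{t,k}d_t\mid o_t\}$ will follow the same template with $f(1)$ replaced by $f(x_{t,k}+1)$, since a retransmission increments the attempt count from $x_{t,k}$ to $x_{t,k}+1$.
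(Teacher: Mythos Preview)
Your proposal is correct and follows essentially the same route as the paper: both condition on $u_{t,k}$ via the tower property, note that the product vanishes when $u_{t,k}=0$, evaluate $\mathbb{E}\{d_t\mid o_t,u_{t,k}=1\}=f(1)$ from the HARQ model, and then identify $\mathrm{Pr}(u_{t,k}=1\mid o_t)=\mathbb{E}\{u_{t,k}\mid o_t\}$. Your remark about the modeling assumption (that $d_t$ depends on the action only through the attempt count) and your anticipation of the analogous result for $r_{t,k}$ with $f(x_{t,k}+1)$ are both on point.
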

\begin{proof}
Based on the law of iterated expectations, we have
\begin{equation*}\label{eexpages123}
\begin{array}{ll}
   & \mathbb{E}\{u_{t,k}d_{t}\mid o_t\}= \mathbb{E}\big\{\mathbb{E}\{u_{t,k}d_{t}\mid o_t,u_{t,k}\}\big\} =\mathbb{E}\{1d_{t}\mid o_t,u_{t,k}=1\}\mathrm{Pr}(u_{t,k}=1\mid o_t) \\
    &\hspace{1mm}+\mathbb{E}\{0d_t\mid o_t,u_{t,k}=0\}\mathrm{Pr}(u_{t,k}=0\mid o_t) =\Big(1f(1) + 0\big(1-f(1)\big)\Big)\mathrm{Pr}(u_{t,k}=1\mid o_t)\\&\hspace{-0mm}= f(1)\mathrm{Pr}(u_{t,k}=1\mid o_t)\overset{(a)}{=}\mathbb{E}\{u_{t,k}\mid o_t\}f(1),
    \end{array}
\end{equation*}
where the equality ($a$) comes from the following equality
\begin{equation}\label{ap4}
    \begin{array}{ll}
\mathbb{E}\{u_{t,k}\mid o_t\} = 0\mathrm{Pr}(u_{t,k}=0\mid o_t) + 1\mathrm{Pr}(u_{t,k}=1\mid o_t) = \mathrm{Pr}(u_{t,k}=1\mid o_t).
    \end{array}
\end{equation}
%By substituting \eqref{ap4} in \eqref{eexpages123}, we have
%\begin{equation}\label{eexpages2m}
%\begin{array}{ll}
%&\mathbb{E}\{u_{t,k}d_{t}\mid o_t\} = \mathbb{E}\{u_{t,k}\mid o_t\}f(1).
%\end{array}
%\end{equation}
\end{proof}
\begin{lemma}\label{l2}
For any source $k$, the conditional expectation $\mathbb{E}\{r_{t,k}d_{t}\mid o_t\}$ is given as%$k\in \mathcal{K}$, any network state $o_t$, and any $t\in \mathbb{N}$, we have
\begin{equation}\label{eexpages3}
    \begin{array}{ll}
    &\mathbb{E}\{r_{t,k}d_{t}\mid o_t\} = \mathbb{E}\{r_{t,k}\mid o_t\}f(x_{t,k}+1).\\
    \end{array}
\end{equation}
\end{lemma}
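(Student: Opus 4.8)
The plan is to mirror the proof of Lemma \ref{l1} exactly, replacing the fresh-packet transmission variable $u_{t,k}$ with the retransmission variable $r_{t,k}$ and the first-transmission success probability $f(1)$ with $f(x_{t,k}+1)$. The key observation is that, conditioned on the network state $o_t$ and on the event $r_{t,k}=1$, the transmitter retransmits the under-process packet of source $k$, which has already been attempted $x_{t,k}$ times; hence this is the $(x_{t,k}+1)$-th transmission attempt of that packet, and by the HARQ decoding model the probability of an ACK is $f(x_{t,k}+1)$. Since $x_{t,k}$ is part of the state $o_t$, this quantity is deterministic given $o_t$.

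First I would invoke the law of iterated expectations to condition on $r_{t,k}$:
\begin{equation*}
\mathbb{E}\{r_{t,k}d_{t}\mid o_t\}= \mathbb{E}\big\{\mathbb{E}\{r_{t,k}d_{t}\mid o_t,r_{t,k}\}\mid o_t\big\}.
\end{equation*}
Then I would expand the inner expectation over the two values of $r_{t,k}$. When $r_{t,k}=0$ the product $r_{t,k}d_t$ is identically zero, so that term drops. When $r_{t,k}=1$, the product equals $d_t$, and $\mathbb{E}\{d_t\mid o_t, r_{t,k}=1\}=\mathrm{Pr}(d_t=1\mid o_t, r_{t,k}=1)=f(x_{t,k}+1)$ by the HARQ model and the fact that $x_{t,k}$ is determined by $o_t$. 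This gives
\begin{equation*}
\mathbb{E}\{r_{t,k}d_{t}\mid o_t\}= f(x_{t,k}+1)\,\mathrm{Pr}(r_{t,k}=1\mid o_t).
\end{equation*}
Finally, analogously to \eqref{ap4}, since $r_{t,k}\in\{0,1\}$ we have $\mathbb{E}\{r_{t,k}\mid o_t\}=\mathrm{Pr}(r_{t,k}=1\mid o_t)$, which yields the claimed identity $\mathbb{E}\{r_{t,k}d_{t}\mid o_t\}=\mathbb{E}\{r_{t,k}\mid o_t\}f(x_{t,k}+1)$.

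I do not anticipate a genuine obstacle here; the statement is essentially a one-line consequence of the HARQ decoding assumption and conditioning. The only point requiring a little care is the justification that $f(x_{t,k}+1)$ is the correct success probability for the retransmission — namely that $x_{t,k}$, the count of prior attempts of the under-process packet, is encoded in the state $s_{t,k}\subset o_t$, so that after one more (re)transmission the packet has been sent $x_{t,k}+1$ times, and by the monotone decoding-probability model the ACK probability is $f(x_{t,k}+1)$. This is exactly the structural parallel to Lemma \ref{l1}, where a fresh packet has been attempted once and thus succeeds with probability $f(1)$.
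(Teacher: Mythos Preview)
Your proposal is correct and follows essentially the same approach as the paper: both apply the law of iterated expectations by conditioning on $r_{t,k}$, discard the $r_{t,k}=0$ branch, identify $\mathbb{E}\{d_t\mid o_t,r_{t,k}=1\}=f(x_{t,k}+1)$ from the HARQ model, and then use $\mathbb{E}\{r_{t,k}\mid o_t\}=\mathrm{Pr}(r_{t,k}=1\mid o_t)$ to conclude. Your additional remark that $x_{t,k}$ is part of $o_t$ (so $f(x_{t,k}+1)$ is deterministic given $o_t$) is a helpful clarification that the paper leaves implicit.
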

\begin{proof}
Following the same steps as in the proof of Lemma \ref{l1}, we have
\begin{equation*}\label{eexpages1}
\begin{array}{ll}
    &\mathbb{E}\{r_{t,k}d_{t}\mid o_t\}  = \mathbb{E}\big\{\mathbb{E}\{r_{t,k}d_{t}\mid o_t,r_{t,k}\}\big\} =\mathbb{E}\{1d_{t}\mid o_t,r_{t,k}=1\}\mathrm{Pr}(r_{t,k}=1\mid o_t)\\&+ \mathbb{E}\{0d_{t}\mid o_t,r_{t,k}=0\}\mathrm{Pr}(r_{t,k}=0\mid o_t) \\&=\Big(1f(x_{t,k}+1) + 0\big(1-f(x_{t,k}+1)\big)\Big)\mathrm{Pr}(r_{t,k}=1\mid o_t)\\
    &\hspace{0mm}= f(x_{t,k}+1)\mathrm{Pr}(r_{t,k}=1\mid o_t)\overset{(a)}{=} \mathbb{E}\{r_{t,k}\mid o_t\}f(x_{t,k}+1),
    \end{array}
\end{equation*}
where the equality ($a$) comes from the following equality
\begin{equation}\label{ap5}
    \begin{array}{ll}
\mathbb{E}\{r_{t,k}\mid o_t\} = 0\mathrm{Pr}(r_{t,k}=0\mid o_t) + 1\mathrm{Pr}(r_{t,k}=1) = \mathrm{Pr}(r_{t,k}=1\mid o_t).
    \end{array}
\end{equation}
%By substituting \eqref{ap5} in \eqref{eexpages1}, we have
%\begin{equation}\label{eexpages2m}
%\begin{array}{ll}
%&\mathbb{E}\{r_{t,k}d_{t}\mid o_t\} = \mathbb{E}\{r_{t,k}\mid o_t\}f(x_{t,k}+1).
%\end{array}
%\end{equation}
\end{proof}
%Substituting \eqref{eexpages2} and \eqref{eexpages3} into \eqref{pureeaoi}, $\mathbb{E}\big\{\hat\delta_{t + 1,k}\mid o_t\big\}$ is calculated as
Using Lemmas \ref{l1} and \ref{l2}, the expression in \eqref{pureeaoi} becomes
\begin{equation}\label{eaoi2}
    \begin{array}{ll}
    
    \mathbb{E}\big\{\hat\delta_{t + 1}\mid o_t\big\}& \hspace{-3mm}=\frac{1}{K}\sum_{k\in\mathcal{K}}\mathbb{E}\{u_{t,k}\mid o_t\}f(1)\tilde\delta_{t,k}^{\mathrm{f}} + \mathbb{E}\{r_{t,k}\mid o_t\}f(x_{t,k}+1)\tilde\delta_{t,k}^{\mathrm{p}}\\
    &\hspace{1mm}+ \big[1-f(1)\mathbb{E}\{u_{t,k}\mid o_t\}-f(x_{t,k}+1)\mathbb{E}\{r_{t,k}\mid o_t\}\big]\tilde\delta_{t,k}.
    
    \end{array}
\end{equation}

%%%%%%%%%%%%%%%%%%%%%%%%%%%%%%%%%%%%%
%%%%%%%%%%%%%%%%%%%%%%%%%%%%%%%%%%%%%%%%%%%%%%%%%%%%%%%%%%%%%%
\section{Proof of Lemma \ref{lm2}}\label{plm2}
To derive $\mathbb{E}\{\hat{\delta}^2_{t + 1}\mid o_t\}$, we use the definition in \eqref{avokaoi}, $\hat\delta_{t} = {\frac{1}{K}\sum_{k=1}^{K} \delta_{t,k}}$, and re-express it as
\begin{equation}\label{mainl2}
    \begin{array}{ll}
        \mathbb{E}\{\hat{\delta}^2_{t + 1}\mid o_t\} &\hspace{-3mm}= \frac{1}{K^2}\mathbb{E}\{(\delta_{t + 1,1}+\dots +\delta_{t + 1,K})^2\mid o_t\}\\
 &\hspace{-3mm}= \frac{1}{K^2}\big[\sum_{k\in\mathcal{K}}\mathbb{E}\{\delta_{t + 1,k}^2\mid o_t\} + \sum_{k\in\mathcal{K}}\sum_{k'\in\mathcal{K},k'\neq k}\mathbb{E}\{\delta_{t + 1,k}\delta_{t + 1,k'}\mid o_t\}\big].
    \end{array}
\end{equation}
We need to calculate the terms $\mathbb{E}\{\delta_{t + 1,k}^2\mid o_t\}$ and $\mathbb{E}\{\delta_{t + 1,k}\delta_{t + 1,k'}\mid o_t\}$. As the conditions in \eqref{age1} are mutually exclusive and collectively exhaustive and the square of a binary variable equals the variable itself, $\delta^2_{t + 1,k}$ can be calculated from \eqref{age1} and \eqref{ages1} as 
\begin{equation}\label{ages2}
    \begin{array}{ll}
    
    \delta_{t+1,k}^2& \hspace{-3mm}= u_{t,k}d_{t}(\tilde\delta_{t,k}^{\mathrm{f}})^2 + r_{t,k}d_{t}(\tilde\delta_{t,k}^{\mathrm{p}})^2 + \big[ 1- d_{t}(u_{t,k}+r_{t,k}) \big](\tilde\delta_{t,k})^2.
    
    \end{array}
\end{equation}
Taking conditional expectation in \eqref{ages2}, $\mathbb{E}\big\{\delta^2_{t + 1,k}\mid o_t\big\}$ is expressed as
\begin{equation}\label{pureeaoi1}
\begin{array}{ll}
   \!\!\mathbb{E}\{\delta_{t+1,k}^2 \mid o_t\}  \hspace{0mm}= \mathbb{E}\{u_{t,k}d_{t}(\tilde\delta_{t,k}^{\mathrm{f}} )^2\mid o_t\} + \mathbb{E}\{r_{t,k}d_{t}(\tilde\delta_{t,k}^{\mathrm{p}} )^2\mid o_t\}\\
    \hspace{0mm}+ \mathbb{E}\big\{\big[ 1- d_{t}(u_{t,k}+r_{t,k}) \big](\tilde\delta_{t,k})^2\mid o_t\big\}\overset{(a)}{=} \mathbb{E}\{u_{t,k}d_{t}\mid o_t\}(\tilde\delta_{t,k}^{\mathrm{f}} )^2 + \mathbb{E}\{r_{t,k}d_{t}\mid o_t\}(\tilde\delta_{t,k}^{\mathrm{p}} )^2\\
    \hspace{0mm}+\big[ 1- \mathbb{E}\{d_{t}(u_{t,k}+r_{t,k})\mid o_t\} \big](\tilde\delta_{t,k} )^2,
    \end{array}
\end{equation}
where equality $(a)$ follows from the fact that $\delta_{t,k}^{\mathrm{f}}$, $\delta_{t,k}^{\mathrm{p}}$, and $\delta_{t,k}$ are given by the network state.

Using Lemmas \ref{l1} and \ref{l2}, the expression in \eqref{pureeaoi1} is calculated as 
\begin{equation}\label{pureeaoi12}
\begin{array}{ll}
   \mathbb{E}\{\delta_{t+1,k}^2 \mid o_t\} & \hspace{-3mm}=  \mathbb{E}\{u_{t,k}\mid o_t\}f(1)(\tilde\delta_{t,k}^{\mathrm{f}})^2 + \mathbb{E}\{r_{t,k}\mid o_t\}f(x_{t,k}+1)(\tilde\delta_{t,k}^{\mathrm{p}})^2\\
    &\hspace{1mm}+\big[1-f(1)\mathbb{E}\{u_{t,k}\mid o_t\}-f(x_{t,k}+1)\mathbb{E}\{r_{t,k}\mid o_t\}\big](\tilde\delta_{t,k} )^2.
    \end{array}
\end{equation}

Now, we calculate $\delta_{t + 1,k}{\delta}_{t + 1,k'}$ in expression \eqref{mainl2}. Based on \eqref{ages1}, we can express it as 
\begin{equation}\label{ages3ap1}
    \begin{array}{ll}
    \!\!\!\!\delta_{t+1,k}\delta_{t+1,k'}&\hspace{-3mm}= u_{t,k}d_{t}\big(1-d_{t}(u_{t,k'}+r_{t,k'})\big)\tilde\delta_{t,k}^{\mathrm{f}} \tilde\delta_{t,k'} + r_{t,k}d_{t}\big(1-d_{t}(u_{t,k'}+r_{t,k'})\big)\tilde\delta_{t,k}^{\mathrm{p}} \tilde\delta_{t,k'} \\
    &\hspace{1mm}+\big(1-d_{t}(u_{t,k}+r_{t,k})\big)(d_{t}u_{t,k'})\tilde\delta_{t,k}\tilde\delta_{t,k'}^{\mathrm{f}}+\big(1-d_{t}(u_{t,k}+r_{t,k})\big)(d_{t}r_{t,k'})\tilde\delta_{t,k}\tilde\delta_{t,k'}^{\mathrm{p}}\\
&\hspace{1mm}+\big(1-d_{t}(u_{t,k}+r_{t,k})\big)\big(1-d_{t}(u_{t,k'}+r_{t,k'})\big)\tilde\delta_{t,k'} \tilde\delta_{t,k}.
    \end{array}
\end{equation}
Because the transmitter can transmit one packet per slot, we have $u_{t,k}u_{t,k'}=0$ and ${u_{t,k}r_{t,k'}=0}$. Thus, the expression in \eqref{ages3ap1} is rewritten as
\begin{equation}\label{ages3ap2}
    \begin{array}{ll}
    \delta_{t+1,k}\delta_{t+1,k'}&\hspace{-3mm}=u_{t,k}d_{t}\tilde\delta_{t,k}^{\mathrm{f}} \tilde\delta_{t,k'} + r_{t,k}d_{t}\tilde\delta_{t,k}^{\mathrm{p}} \tilde\delta_{t,k'}+ u_{t,k'}d_{t}\tilde\delta_{t,k'}^{\mathrm{f}} \tilde\delta_{t,k}  + r_{t,k'}d_{t}\tilde\delta_{t,k'}^{\mathrm{p}} \tilde\delta_{t,k} \\
    &\hspace{1mm}-u_{t,k}d_{t}\tilde\delta_{t,k} \tilde\delta_{t,k'} -r_{t,k}d_{t}\tilde\delta_{t,k} \tilde\delta_{t,k'} -u_{t,k'}d_{t}\tilde\delta_{t,k'} \tilde\delta_{t,k} -r_{t,k'}d_{t}\tilde\delta_{t,k'} \tilde\delta_{t,k} +\tilde\delta_{t,k'} \tilde\delta_{t,k}.
\end{array}
\end{equation}
Using Lemmas \ref{l1} and \ref{l2}, the conditional expectation of the expression in \eqref{ages3ap2} is calculated as 
\begin{equation}\label{emtaoi3}
    \begin{array}{ll}
    \!\!\!\mathbb{E}\big\{\delta_{t+1,k}\delta_{t+1,k'}\mid o_t\big\}=\mathbb{E}\{u_{t,k}\mid o_t\}f(1)\tilde\delta_{t,k}^{\mathrm{f}}\tilde\delta_{t,k'} + \mathbb{E}\{r_{t,k}\mid o_t\}f(x_{t,k}+1)\tilde\delta_{t,k}^{\mathrm{p}} \tilde\delta_{t,k'} \\ + \mathbb{E}\{u_{t,k'}\mid o_t\}f(1)\tilde\delta_{t,k'}^{\mathrm{f}}\tilde\delta_{t,k} + \mathbb{E}\{r_{t,k'}\mid o_t\}f(x_{t,k'}+1)\tilde\delta_{t,k'}^{\mathrm{p}} \tilde\delta_{t,k} \\- \mathbb{E}\{u_{t,k}\mid o_t\} f(1)\tilde\delta_{t,k} \tilde\delta_{t,k'} -\mathbb{E}\{r_{t,k}\mid o_t\} f(x_{t,k}+1)\tilde\delta_{t,k} \tilde\delta_{t,k'} \\-\mathbb{E}\{u_{t,k'}\mid o_t\} f(1)\tilde\delta_{t,k'} \tilde\delta_{t,k} -\mathbb{E}\{r_{t,k'}\mid o_t\} f(x_{t,k'}+1)\tilde\delta_{t,k'} \tilde\delta_{t,k} +\tilde\delta_{t,k'} \tilde\delta_{t,k}. 
    \end{array}
\end{equation}

Substituting \eqref{pureeaoi12} and \eqref{emtaoi3} into \eqref{mainl2}, we  derive
\begin{equation*}\label{mainl21}
    \begin{array}{ll}
        \mathbb{E}\{\hat{\delta}^2_{t + 1}\mid o_t\} 
 \hspace{0mm}= \frac{1}{K^2}\Big[\sum_{k\in\mathcal{K}}\mathbb{E}\{u_{t,k}\mid o_t\}f(1)(\tilde\delta_{t,k}^{\mathrm{f}})^2 + \mathbb{E}\{r_{t,k}\mid o_t\}f(x_{t,k}+1)(\tilde\delta_{t,k}^{\mathrm{p}})^2\\
    \hspace{1mm}+\big[1-f(1)\mathbb{E}\{u_{t,k}\mid o_t\}-f(x_{t,k}+1)\mathbb{E}\{r_{t,k}\mid o_t\}\big](\tilde\delta_{t,k} )^2\\ \hspace{1mm} + \sum_{k\in\mathcal{K}}\sum_{k'\in\mathcal{K},k'\neq k}\mathbb{E}\{u_{t,k}\mid o_t\}f(1)\tilde\delta_{t,k}^{\mathrm{f}}\tilde\delta_{t,k'} + \mathbb{E}\{r_{t,k}\mid o_t\}f(x_{t,k}+1)\tilde\delta_{t,k}^{\mathrm{p}} \tilde\delta_{t,k'} \\ \hspace{1mm}+ \mathbb{E}\{u_{t,k'}\mid o_t\}f(1)\tilde\delta_{t,k'}^{\mathrm{f}}\tilde\delta_{t,k} + \mathbb{E}\{r_{t,k'}\mid o_t\}f(x_{t,k'}+1)\tilde\delta_{t,k'}^{\mathrm{p}} \tilde\delta_{t,k} \\ \hspace{1mm}- \mathbb{E}\{u_{t,k}\mid o_t\} f(1)\tilde\delta_{t,k} \tilde\delta_{t,k'} -\mathbb{E}\{r_{t,k}\mid o_t\} f(x_{t,k}+1)\tilde\delta_{t,k} \tilde\delta_{t,k'} \\
    \hspace{1mm}-\mathbb{E}\{u_{t,k'}\mid o_t\} f(1)\tilde\delta_{t,k'} \tilde\delta_{t,k} -\mathbb{E}\{r_{t,k'}\mid o_t\} f(x_{t,k'}+1)\tilde\delta_{t,k'} \tilde\delta_{t,k} +\tilde\delta_{t,k'} \tilde\delta_{t,k}\Big].
    \end{array}
\end{equation*}

%%%%%%%%%%%%%%%%%%%%%%%%%%%%%%%%%%%%%%%%%%%%%%%%%%%%%%
\section{The expression for the objective function of problem \eqref{fdp1}}\label{pfp}
%\red{In this part, we present} $W(u_{t,k},r_{t,k})$.
To derive $W_t$, we rewrite \eqref{eqQ} using Lemmas \ref{lm1} and \ref{lm2} as 
\begin{equation}\label{eqQa}
\begin{array}{ll}
   \hspace{-3mm}V\mathbb{E}\{\tau_{t}\mid o_t \} + \alpha(o_{t})\hspace{-0mm}\leq V\sum_{k\in \mathcal{K}}\mathbb{E}\{u_{t,k}\mid o_t\}+ \mathbb{E}\{r_{t,k}\mid o_t\}+\frac{1}{2} \big(({\Delta}^{\mathrm{max}})^2 \\
   \hspace{1mm} + \frac{1}{K^2}\Big[\sum_{k\in\mathcal{K}}\mathbb{E}\{u_{t,k}\mid o_t\}f(1)(\tilde\delta_{t,k}^{\mathrm{f}})^2 + \mathbb{E}\{r_{t,k}\mid o_t\}f(x_{t,k}+1)(\tilde\delta_{t,k}^{\mathrm{p}})^2\\
    \hspace{1mm}+\big[1-f(1)\mathbb{E}\{u_{t,k}\mid o_t\}-f(x_{t,k}+1)\mathbb{E}\{r_{t,k}\mid o_t\}\big](\tilde\delta_{t,k} )^2\\ \hspace{1mm} + \sum_{k\in\mathcal{K}}\sum_{k'\in\mathcal{K},k'\neq k}\mathbb{E}\{u_{t,k}\mid o_t\}f(1)\tilde\delta_{t,k}^{\mathrm{f}}\tilde\delta_{t,k'} + \mathbb{E}\{r_{t,k}\mid o_t\}f(x_{t,k}+1)\tilde\delta_{t,k}^{\mathrm{p}} \tilde\delta_{t,k'} \\ \hspace{1mm}+ \mathbb{E}\{u_{t,k'}\mid o_t\}f(1)\tilde\delta_{t,k'}^{\mathrm{f}}\tilde\delta_{t,k} + \mathbb{E}\{r_{t,k'}\mid o_t\}f(x_{t,k'}+1)\tilde\delta_{t,k'}^{\mathrm{p}} \tilde\delta_{t,k} - \mathbb{E}\{u_{t,k}\mid o_t\} f(1)\tilde\delta_{t,k} \tilde\delta_{t,k'} \\\hspace{1mm}-\mathbb{E}\{r_{t,k}\mid o_t\} f(x_{t,k}+1)\tilde\delta_{t,k} \tilde\delta_{t,k'} -\mathbb{E}\{u_{t,k'}\mid o_t\} f(1)\tilde\delta_{t,k'} \tilde\delta_{t,k} \\
    \hspace{1mm}-\mathbb{E}\{r_{t,k'}\mid o_t\} f(x_{t,k'}+1)\tilde\delta_{t,k'} \tilde\delta_{t,k} +\tilde\delta_{t,k'} \tilde\delta_{t,k}\Big]+2Q_{t} \big(\frac{1}{K}\sum_{k\in\mathcal{K}}\mathbb{E}\{u_{t,k}\mid o_t\}f(1)\tilde\delta_{t,k}^{\mathrm{f}} \\
   \hspace{1mm}+ \mathbb{E}\{r_{t,k}\mid o_t\}f(x_{t,k}+1)\tilde\delta_{t,k}^{\mathrm{p}}\\
    \hspace{1mm}+ \big[1-f(1)\mathbb{E}\{u_{t,k}\mid o_t\}-f(x_{t,k}+1)\mathbb{E}\{r_{t,k}\mid o_t\}\big]\tilde\delta_{t,k} - {\Delta}^{\mathrm{max}} \big) .
\end{array}
\end{equation}
Dropping the expectation in \eqref{eqQa}, $W_t$ is derived as
\begin{equation*}\label{eqQ5}
\begin{array}{ll}
   W_t&\hspace{-3mm}= V\sum_{k\in \mathcal{K}}u_{t,k}+r_{t,k} +\frac{1}{2K^2}\Big[\sum_{k\in \mathcal{K}}u_{t,k}f(1) (\tilde\delta_{t,k}^{\mathrm{f}})^2  + r_{t,k}f(x_{t,k}+1)(\tilde\delta_{t,k}^{\mathrm{p}})^2\\&\hspace{1mm}+\big[ 1-u_{t,k} f(1) - r_{t,k} f(x_{t,k}+1) \big](\tilde\delta_{t,k})^2+\sum_{k\in \mathcal{K}}\sum_{k'\in \mathcal{K}, k'\neq k}u_{t,k}f(1)\tilde\delta_{t,k}^{\mathrm{f}}\tilde\delta_{t,k'}\\
   &\hspace{1mm}+ r_{t,k}f(x_{t,k}+1)\tilde\delta_{t,k}^{\mathrm{p}} \tilde\delta_{t,k'}- u_{t,k} f(1)\tilde\delta_{t,k} \tilde\delta_{t,k'}-r_{t,k} f(x_{t,k}+1)\tilde\delta_{t,k} \tilde\delta_{t,k'}\\
   &\hspace{1mm}+ u_{t,k'}f(1)\tilde\delta_{t,k'}^{\mathrm{f}}\tilde\delta_{t,k}+ r_{t,k'}f(x_{t,k'}+1)\tilde\delta_{t,k'}^{\mathrm{p}} \tilde\delta_{t,k} -u_{t,k'} f(1)\tilde\delta_{t,k'} \tilde\delta_{t,k} \\
   &\hspace{1mm}-r_{t,k'} f(x_{t,k'}+1)\tilde\delta_{t,k'} \tilde\delta_{t,k} +\tilde\delta_{t,k'} \tilde\delta_{t,k} +2KQ_{t} \big(\sum_{k\in \mathcal{K}}u_{t,k} f(1)\tilde\delta_{t,k}^{\mathrm{f}}+ r_{t,k}f(x_{t,k}+1)\tilde\delta_{t,k}^{\mathrm{p}} \\&\hspace{1mm}+\big[ 1-u_{t,k} f(1) - r_{t,k} f(x_{t,k}+1)    \big]\tilde\delta_{t,k}\Big ]+\frac{1}{2} \big(({\Delta}^{\mathrm{max}})^2 - 2Q_{t} {\Delta}^{\mathrm{max}}\big).
\end{array}
\end{equation*}
%Appendix one text goes here.

% you can choose not to have a title for an appendix
% if you want by leaving the argument blank
%\section{}
%Appendix two text goes here.

% use section* for acknowledgment
%\section*{Acknowledgment}

%The authors would like to thank...

% Can use something like this to put references on a page
% by themselves when using endfloat and the captionsoff option.
\ifCLASSOPTIONcaptionsoff
  \newpage
\fi

% trigger a \newpage just before the given reference
% number - used to balance the columns on the last page
% adjust value as needed - may need to be readjusted if
% the document is modified later
%\IEEEtriggeratref{8}
% The "triggered" command can be changed if desired:
%\IEEEtriggercmd{\enlargethispage{-5in}}

% references section

% can use a bibliography generated by BibTeX as a .bbl file
% BibTeX documentation can be easily obtained at:
% http://mirror.ctan.org/biblio/bibtex/contrib/doc/
% The IEEEtran BibTeX style support page is at:
% http://www.michaelshell.org/tex/ieeetran/bibtex/
%\bibliographystyle{IEEEtran}
% argument is your BibTeX string definitions and bibliography database(s)
%\bibliography{IEEEabrv,../bib/paper}
%
% <OR> manually copy in the resultant .bbl file
% set second argument of \begin to the number of references
% (used to reserve space for the reference number labels box)
\bibliographystyle{IEEEtran}
\begin{spacing}{1.35}
\bibliography{conf_short,jour_short,main}
\end{spacing}
% that's all folks
\end{document}